\newtheorem{theorem}{Theorem}
\newtheorem{lemma}[theorem]{Lemma}
\newtheorem{definition}{Definition}
\newtheorem{proposition}{Proposition}
\newtheorem{corollary}{Corollary}
\begin{document}
%
% paper title
% can use linebreaks \\ within to get better formatting as desired
\title{On Bounds and Closed Form Expressions for Capacities of Discrete Memoryless 
Channels with Invertible Positive Matrices}
%
%
% author names and IEEE memberships
% note positions of commas and nonbreaking spaces ( ~ ) LaTeX will not break
% a structure at a ~ so this keeps an author's name from being broken across
% two lines.
% use \thanks{} to gain access to the first footnote area
% a separate \thanks must be used for each paragraph as LaTeX2e's \thanks
% was not built to handle multiple paragraphs
%

\author{\IEEEauthorblockN{Thuan Nguyen}
\IEEEauthorblockA{School of Electrical and\\Computer Engineering\\
Oregon State University\\
Corvallis, OR, 97331\\
Email: nguyeth9@oregonstate.edu}
\and
\IEEEauthorblockN{Thinh Nguyen}
\IEEEauthorblockA{School of Electrical and\\Computer Engineering\\
Oregon State University\\
Corvallis, 97331 \\
Email: thinhq@eecs.oregonstate.edu}
}

% note the % following the last \IEEEmembership and also \thanks - 
% these prevent an unwanted space from occurring between the last author name
% and the end of the author line. i.e., if you had this:
% 
% \author{....lastname \thanks{...} \thanks{...} }
%                     ^------------^------------^----Do not want these spaces!
%
% a space would be appended to the last name and could cause every name on that
% line to be shifted left slightly. This is one of those "LaTeX things". For
% instance, "\textbf{A} \textbf{B}" will typeset as "A B" not "AB". To get
% "AB" then you have to do: "\textbf{A}\textbf{B}"
% \thanks is no different in this regard, so shield the last } of each \thanks
% that ends a line with a % and do not let a space in before the next \thanks.
% Spaces after \IEEEmembership other than the last one are OK (and needed) as
% you are supposed to have spaces between the names. For what it is worth,
% this is a minor point as most people would not even notice if the said evil
% space somehow managed to creep in.

% The paper headers
\markboth{IEEE Transactions on Communications}%
{Submitted paper}
% The only time the second header will appear is for the odd numbered pages
% after the title page when using the twoside option.
% 
% *** Note that you probably will NOT want to include the author's ***
% *** name in the headers of peer review papers.                   ***
% You can use \ifCLASSOPTIONpeerreview for conditional compilation here if
% you desire.

% If you want to put a publisher's ID mark on the page you can do it like
% this:
%\IEEEpubid{0000--0000/00\$00.00~\copyright~2007 IEEE}
% Remember, if you use this you must call \IEEEpubidadjcol in the second
% column for its text to clear the IEEEpubid mark.

% use for special paper notices
%\IEEEspecialpapernotice{(Invited Paper)}

% make the title area
\maketitle

\begin{abstract}
%\boldmath
While capacities of discrete memoryless channels are well studied, it is still not possible to obtain a closed form expression for the capacity of an arbitrary discrete memoryless channel.  This paper describes an elementary technique based on Karush-Kuhn-Tucker (KKT) conditions to obtain (1) a good upper bound of a discrete memoryless channel having an invertible positive channel matrix and (2) a closed form expression for the capacity if the channel matrix satisfies certain conditions related to its singular value and its Gershgorin's disk.
  
\end{abstract}
% IEEEtran.cls defaults to using nonbold math in the Abstract.
% This preserves the distinction between vectors and scalars. However,
% if the journal you are submitting to favors bold math in the abstract,
% then you can use LaTeX's standard command \boldmath at the very start
% of the abstract to achieve this. Many IEEE journals frown on math
% in the abstract anyway.

% Note that keywords are not normally used for peerreview papers.
\begin{IEEEkeywords}
Wireless Communication, Convex Optimization, Channel Capacity, Mutual Information.
\end{IEEEkeywords}

% For peer review papers, you can put extra information on the cover
% page as needed:
% \ifCLASSOPTIONpeerreview
% \begin{center} \bfseries EDICS Category: 3-BBND \end{center}
% \fi
%
% For peerreview papers, this IEEEtran command inserts a page break and
% creates the second title. It will be ignored for other modes.
\IEEEpeerreviewmaketitle

\section{Introduction}
Discrete memoryless channels (DMC) play a critical role in the early development of  information theory and its applications. DMCs are especially useful for studying many well-known modulation/demodulation schemes (e.g., PSK and QAM ) in which the continuous inputs and outputs of a channel are quantized into discrete symbols.  Thus, there exists a rich literature on the capacities of DMCs \cite{cover2012elements}, \cite{blahut1972computation}, \cite{arimoto1972algorithm}, \cite{muroga1953capacity}, \cite{shannon1956zero}, \cite{robert1990ash},  \cite{nguyen2018closed}.  In particular, capacities of many well-known channels such as (weakly) symmetric channels can be written in elementary formulas \cite{cover2012elements}.  However, it is often not possible to express the capacity of an arbitrary DMC in a closed form expression \cite{cover2012elements}. Recently, several papers have been able to obtain closed form expressions for a small class of DMCs with  small alphabets. For example,  Martin et al. established closed form expression  for a general binary channel \cite{martin2010algebraic}.  Liang showed that the capacity of  channels with two inputs and three outputs can be expressed  as an infinite series \cite{liang2008algebraic}. Paul Cotae et al. found the capacity of  two input and two output channels in term of the eigenvalues of the channel matrices \cite{cotae2010eigenvalue}. On the other hand, the problem of finding the capacity of a discrete memoryless channel can be formulated as a convex optimization problem \cite{grant2008cvx}, \cite{sinha2014convex}. Thus, efficient algorithmic solutions exist.  There is also others algorithms such as Arimoto-Blahut algorithm \cite{blahut1972computation}, \cite{arimoto1972algorithm} which can be accelerated in \cite{dupuis2004blahut}, \cite{matz2004information}, \cite{yu2010squeezing}. In \cite{meister1967capacity}, \cite{jimbo1979iteration}, another iterative method which can yield both upper and lower bounds for the channel capacity.

That said, it is still beneficial to find the channel capacity in closed form expression for a number of reasons.  These include  (1) formulas can often provide a good intuition about the relationship between the capacity and different channel parameters, (2) formulas offer a faster way to determine the capacity than that of algorithms, and (3) formulas are useful for analytical derivations where closed form expression of the capacity is needed in the intermediate steps.  To that end, our paper describes an elementary technique based on the theory of convex optimization, to find closed form expressions for (1) a new upper bound on capacities of discrete memoryless channels with positive invertible channel matrix and (2) the optimality conditions of the channel matrix such that the upper bound is precisely the capacity.  In particular, the optimality conditions establish a relationship between the singular value and the Gershgorin's disk of the channel matrix.
%To illustrate our work, we discuss a class of useful channel matrices where these optimal conditions hold, and thus the closed form capacities for these channels can be found.  
%While there exists a number of capacity upper bounds for different DMC \cite{shannon1998mathematical},\cite{muroga1953capacity},\cite{muroga1956capacity}, our work provide a new bounds, together with with closed  approach is quite different due to using the optimization function to the output distribution variables, and,  the closed form expression can be established for channels whose channel matrices satisfy a number of simple checking conditions. 
%We also note that our work only deal with the channels whose channel matrices are invertible. Moreover, we are interested to the positive channel matrices such that every entries of the channel matrix are strictly positive which can be viewed as the results of noise in every transmitted symbols. 

\section{Preliminaries}
\subsection{Convex Optimization and KKT Conditions}
A DMC is characterized by a random variable $X \in \{x_1, x_2, \dots, x_m\}$ for the inputs, a random variable $Y \in \{y_1, y_2, \dots, y_n\}$ for the outputs, and a channel matrix $A \in \mathbf{R}^{m \times n}$. In this paper, we consider DMCs with equal number of inputs and outputs $n$, thus $A \in \mathbf{R}^{n \times n}$. The matrix entry $A_{ij}$ represents the conditional probability that given $x_i$ is transmitted, $y_j$ is received.  Let $p = (p_1, p_2, \dots, p_n)^T$ be the input probability mass vector (pmf) of $X$, where $p_i$ denotes the probability of $x_i$ to be transmitted, then the pmf of $Y$ is $q = (q_1, q_2, \dots, q_n)^T = A^Tp$. The mutual information between $X$ and $Y$ is:
\begin{equation}
I(X;Y) = H(Y) - H(Y|X),
\end{equation}
where
\begin{eqnarray}
 H(Y) &=&-\sum_{j=1}^{n}{q_{j}\log{q_j}} \\
H(Y|X) &=& -\sum_{i=1}^{n}\sum_{j=1}^{n} {p_i A_{ij}} \log {A_{ij}}.
\end{eqnarray}
The mutual information function can be written as:
\begin{equation}
I(X;Y) = -\sum_{j=1}^{n}{(A^Tp)_j\log{(A^Tp)_j}} +\sum_{i=1}^{n}\sum_{j=1}^{n} {p_iA_{ij}} \log {A_{ij}},
\end{equation}
where $(A^Tp)_j$ denotes the $j^{th}$ component of the vector $q = (A^Tp)$.  The capacity $C$  associated with a channel matrix $A$ is the theoretical maximum rate at which information can be transmitted over the channel without the error \cite{shannon1956zero}, \cite{shannon1998mathematical},  \cite{cover1975achievable}.  
%It is defined as: 
%\begin{equation}
%C = \max_{p}{I(X;Y)}.
%\end{equation}
It  is obtained using the optimal pmf $p^*$ such that $I(X;Y)$ is maximized. For a given channel matrix $A$, $I(X;Y)$ is a concave function of $p$ \cite{cover2012elements}.  Therefore, maximizing $I(X;Y)$ is equivalent to minimizing $-I(X;Y)$, and finding the capacity can be cast as the following convex problem:

\indent Minimize: 
 \begin{equation}
	\sum_{j=1}^{n}{(A^Tp)_{j}\log{(A^Tp)_j}} -\sum_{i=1}^{n}\sum_{j=1}^{n} {p_i A_{ij}} \log {A_{ij}} \nonumber. \\
\end{equation}
\indent Subject to: 
 $$\begin{cases}
& p \succeq \mathbf{0}\\
& \mathbf{1}^Tp = 1. 
\end{cases}$$

The optimal $p^*$ can be found efficiently using various algorithms such as gradient methods \cite{boyd2004convex}, but in a few cases, $p^*$ can be found directly using the Karush-Kuhn-Tucker (KKT) conditions \cite{boyd2004convex}.  To explain the KKT conditions,  we first state the canonical convex optimization problem below:

Problem \textbf{P1}:
\indent Minimize: $f(x)$ \\
\
\indent Subject to: 
 $$\begin{cases}
& g_i(x) \le  0, i = 1, 2, \dots n,\\
&h_j(x) = 0, j = 1, 2, \dots, m, \\
\end{cases}$$

where $f(x)$, $g_i(x)$ are convex functions and $h_j(x)$ is a linear function.

Define the Lagrangian function as:
\begin{equation}
\label{sec:lagrangian}
L(x,\lambda, \nu) = f(x) + \sum_{i=1}^n{\lambda_i g_i(x)} + \sum_{j=1}^m{\nu_j h_j(x)},
\end{equation}

then the KKT conditions \cite{boyd2004convex} states that, the optimal point $x^*$ must satisfy:

\begin{equation}
\label{eq:kkt1}
\begin{cases}
g_i(x^*) \le 0, \\ 
h_j(x^*) = 0, \\
\frac{d{L(x, \lambda, \nu)}}{dx}|_{x = x^*, \lambda = \lambda^*, \nu = \nu^*} = 0, \\
\lambda_i^*g_i(x^*) = 0, \\
\lambda_i^* \ge 0.
\end{cases}
\end{equation}
for $i = 1, 2, \dots, n$, $j = 1, 2, \dots, m$.

\subsection{Elementary Linear Algebra Results}

\begin{definition}
	\label{def:K}
	Let $A \in \mathbf{R}^{n \times n}$ be an invertible channel matrix and $H(A_i) = -\sum_{k=1}^{n} {A}_{ik}\log{A}_{ik}$ be the entropy of $i^{th}$ row, define
	$${K}_{j} = -\sum_{i=1}^{n} {{A}_{ji}^{-1}} \sum_{k=1}^{n} {A}_{ik}\log{A}_{ik} = \sum_{i=1}^{n} {{A}_{ji}^{-1}} H(A_i),$$
	where ${A}_{ji}^{-1}$ denotes the entry $(j,i)$ of the inverse matrix $A^{-1}$.  $K_{\max} = \max_j{K_j}$ and $K_{\min} = \min_j{K_j}$ are called the maximum and minimum  inverse row entropies of $A$, respectively. 
\end{definition}

\begin{definition}
	\label{def:gershgorin} 
	Let $A \in \mathbf{R}^{n \times n}$ be a square matrix.  The Gershgorin radius of $i^{th}$ row of $A$ \cite{weisstein2003gershgorin} is defined as:
	\begin{equation}
	R_i(A) = \sum_{j \neq i}^n{|A_{ij}|}.
	\end{equation}
	The Gershgorin ratio of $i^{th}$ row of $A$ is defined as:
	\begin{equation}
	c_i(A) = \frac{A_{ii}}{R_i(A)},
	\end{equation}
	and the minimum Gershgorin ratio of $A$ is defined as:
	\begin{equation}
	\label{eq: c min A}
	c_{\min}(A) = \min_i{\frac{A_{ii}}{R_i(A)}}.
	\end{equation}
\end{definition}
We note that since the channel matrix is a stochastic matrix, therefore 
\begin{equation}
c_{\min}(A) = \min_i{\frac{A_{ii}}{R_i(A)}}= \min_i{\frac{A_{ii}}{1-A_{ii}}}. \label{eq: find c min}
\end{equation}

\begin{definition}
	\label{def:non-negative_diag_dom}
	Let $A \in \mathbf{R}^{n \times n}$ be a square matrix. 
	
(a) $A$ is called a positive matrix if $A_{ij}>0$ for $\forall$ $i,j$. 

(b) $A$ is called a strictly diagonally dominant positive matrix \cite{fiedler1967diagonally} if $A$ is a positive matrix and
\begin{equation}
\label{eq: condition for diagonal}
A_{ii} > \sum_{j \neq i}{A_{ij}}, \forall i, j.
\end{equation}
\end{definition}

%\begin{lemma}
%	\label{prop:sum_rowAinv=1}
%	Let $A \in \mathbf{R}^{n \times n}$ be an invertible channel matrix, then
%	$$A^{-1}\textbf{1} = \textbf{1},$$
%	i.e., the sum of any row of $A^{-1}$ equals to 1.  
%	Furthermore, for any probability mass vector $x$, sum of the vector $y = {A^{-T}}x$ equal to 1.
%\end{lemma}
%\begin{proof}
%	The proof is complete in Appendix \ref{sec: appendix for proposition 1}.
%\end{proof}

%Note that $A^{-1}$ might not be a stochastic matrix since $A^{-1}_{ij}$ might be negative.

\begin{lemma}
	\label{lemma:dominant}
	Let $A \in \mathbf{R}^{n \times n}$ be a strictly diagonally dominant positive channel matrix 
	then (a) it is invertible; (b) the eigenvalues of $A^{-1}$ are  $\frac{1}{\lambda_i}$ $\forall$ $i$ where $\lambda_i$ are eigenvalues of $A$, (c) $A^{-1}_{ii} > 0$ and the largest absolute element in the $i^{th}$ column of $A^{-1}$ is $A^{-1}_{ii}$, i.e.,  $A^{-1}_{ii} \ge |A^{-1}_{ji}|$ for $\forall$ $j$.
\end{lemma}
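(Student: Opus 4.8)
The plan is to treat (a) and (b) as quick consequences of classical linear algebra and to concentrate on (c). For (a), I would invoke Gershgorin's circle theorem \cite{weisstein2003gershgorin}: every eigenvalue $\lambda$ of $A$ lies in some disk $\{z : |z - A_{ii}| \le R_i(A)\}$. Since $A$ is stochastic, $R_i(A) = 1 - A_{ii}$, and strict diagonal dominance means $A_{ii} > 1 - A_{ii}$, i.e.\ $A_{ii} > \tfrac{1}{2}$; hence each such disk is contained in the open right half-plane and in particular excludes $0$, so $0$ is not an eigenvalue and $A$ is invertible (this is the Levy--Desplanques theorem). For (b), invertibility gives $\lambda_i \neq 0$ for all $i$, and from $Av = \lambda_i v$ with $v \neq 0$ one multiplies by $\lambda_i^{-1} A^{-1}$ to get $A^{-1} v = \lambda_i^{-1} v$; since this correspondence is a bijection between eigenpairs, the spectrum of $A^{-1}$ is exactly $\{1/\lambda_i\}$.

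For (c), fix a column index $i$ and set $x = A^{-1} e_i$, the $i$-th column of $A^{-1}$ (here $e_i$ is the $i$-th standard basis vector), so that $x_j = A^{-1}_{ji}$ and $Ax = e_i$, i.e.\ $\sum_{k=1}^n A_{jk} x_k = \delta_{ij}$ for every row $j$. Record that, as noted above, strict diagonal dominance of a stochastic matrix is the statement $2A_{jj} - 1 > 0$ for all $j$. Now let $M = \max_k |x_k|$ and choose $j^\star$ with $|x_{j^\star}| = M$; since $x \neq 0$ (because $Ax = e_i \neq 0$) we have $M > 0$. Isolating the diagonal term in the $j^\star$-th equation and applying the triangle inequality together with $\sum_{k \neq j^\star} A_{j^\star k} = 1 - A_{j^\star j^\star}$ yields the key estimate
\[
A_{j^\star j^\star} M \;\le\; |\delta_{i j^\star}| + \sum_{k \neq j^\star} A_{j^\star k}\,|x_k| \;\le\; |\delta_{i j^\star}| + (1 - A_{j^\star j^\star})\, M,
\]
so that $(2A_{j^\star j^\star} - 1)\, M \le |\delta_{i j^\star}|$. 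If $j^\star \neq i$ the right-hand side is $0$, forcing $M \le 0$, a contradiction; hence the maximum modulus over the column is attained at $k = i$, which is precisely $A^{-1}_{ii} = x_i$ with $|x_i| = M \ge |x_j| = |A^{-1}_{ji}|$ for all $j$. It remains to fix the sign: from the $i$-th equation $A_{ii} x_i = 1 - \sum_{k \neq i} A_{ik} x_k$, bounding the sum in absolute value by $(1 - A_{ii}) M = (1 - A_{ii})|x_i|$ and supposing $x_i < 0$ would give $(2A_{ii} - 1) x_i \ge 1 > 0$, impossible since $2A_{ii} - 1 > 0$; therefore $A^{-1}_{ii} = x_i > 0$.

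The only real obstacle is part (c), and the idea that makes it work is to read $Ax = e_i$ as a discrete maximum principle and run the row estimate at an index $j^\star$ where $|x_k|$ is largest; strict diagonal dominance (equivalently $A_{jj} > 1/2$) is exactly what is needed to force that index to be $i$, and a second, entirely analogous estimate at row $i$ then pins down the sign of $A^{-1}_{ii}$. Parts (a) and (b) should each reduce to a single sentence once Gershgorin's theorem is quoted.
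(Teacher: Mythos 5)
Your proposal is correct and follows essentially the same route as the paper: Gershgorin/Levy--Desplanques for (a), the standard eigenpair correspondence for (b), and for (c) a contradiction argument that expands the row equation $\sum_k A_{j^\star k}A^{-1}_{k i}=\delta_{i j^\star}$ at the index $j^\star$ where the column of $A^{-1}$ attains its maximum modulus and invokes strict diagonal dominance. The only difference is cosmetic: the paper first pins down the sign of the max-modulus entry and then its location, whereas you do location first and sign second, with the same key estimate in both steps.
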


\begin{proof}
	The proof is shown in Appendix \ref{sec: appendix for lemma 1}. 
\end{proof}

\begin{lemma}
	\label{lemma:c_i}
	Let $A \in \mathbf{R}^{n \times n}$ be a strictly diagonally dominant positive matrix, then:
	\begin{equation}
	\label{eq: thuan1419}
	c_i(A^{-T}) \ge \frac{c_{\min}(A)-1}{(n-1)}, \forall i.
	\end{equation}
	Moreover, for any rows $k$ and $l$,
	\begin{eqnarray}
\label{eq: thuan1418}
|A^{-1}_{ki}|+|A^{-1}_{li}| &\leq &  A_{ii}^{-1}\dfrac{c_{\min}(A)}{c_{\min}(A)-1}, \forall i.
\end{eqnarray}

\end{lemma}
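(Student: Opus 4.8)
The plan is to reduce everything to one entrywise estimate on the off‑diagonal entries of each column of $A^{-1}$, after which both inequalities fall out by elementary algebra. Write $B=A^{-1}$ throughout. By Lemma~\ref{lemma:dominant}(c) we have $B_{ii}>0$ and $B_{ii}=\max_j|B_{ji}|$, i.e. the diagonal entry dominates its own column in absolute value. Since $A$ is a row‑stochastic strictly diagonally dominant matrix we have $A_{ii}>\tfrac12$ for every $i$, so by \eqref{eq: find c min} $c_{\min}(A)=\min_i A_{ii}/(1-A_{ii})>1$; in particular the right‑hand sides of \eqref{eq: thuan1419} and \eqref{eq: thuan1418} are positive and finite, and each denominator $R_i(A^{-T})=\sum_{j\neq i}|B_{ji}|$ is nonzero because positivity of $A$ forbids column $i$ of $B$ from being a scalar multiple of $e_i$.

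The key step is a single estimate. Fix a column index $i$ and a row index $j\neq i$. The $(j,i)$ entry of the identity $AB=I$ reads $A_{jj}B_{ji}+\sum_{k\neq j}A_{jk}B_{ki}=0$, so, using $A_{jj}>0$, $A_{jk}\ge 0$, the bound $|B_{ki}|\le B_{ii}$ from Lemma~\ref{lemma:dominant}(c), and $\sum_{k\neq j}A_{jk}=1-A_{jj}=R_j(A)$,
\[
|B_{ji}|\;\le\;\frac{1}{A_{jj}}\sum_{k\neq j}A_{jk}\,|B_{ki}|\;\le\;\frac{1-A_{jj}}{A_{jj}}\,B_{ii}\;=\;\frac{B_{ii}}{c_j(A)}\;\le\;\frac{B_{ii}}{c_{\min}(A)}.
\]

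Given this, \eqref{eq: thuan1419} follows by summing over the $n-1$ indices $j\neq i$: $R_i(A^{-T})=\sum_{j\neq i}|B_{ji}|\le (n-1)B_{ii}/c_{\min}(A)$, and since $(A^{-T})_{ii}=B_{ii}$ this gives $c_i(A^{-T})=B_{ii}/R_i(A^{-T})\ge c_{\min}(A)/(n-1)\ge(c_{\min}(A)-1)/(n-1)$. For \eqref{eq: thuan1418} take two distinct rows $k$ and $l$; then at least one of them, say $l$, differs from $i$, so $|B_{li}|\le B_{ii}/c_{\min}(A)$ while trivially $|B_{ki}|\le B_{ii}$, whence $|B_{ki}|+|B_{li}|\le B_{ii}\bigl(1+1/c_{\min}(A)\bigr)=B_{ii}\,(c_{\min}(A)+1)/c_{\min}(A)\le B_{ii}\,c_{\min}(A)/(c_{\min}(A)-1)$, the last step being equivalent to $c_{\min}(A)^2-1\le c_{\min}(A)^2$, which holds. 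The only substantive ingredient is the displayed estimate, and the one point that really needs care there is the replacement of every $|B_{ki}|$ by the diagonal value $B_{ii}$ — this is exactly Lemma~\ref{lemma:dominant}(c), and without column dominance of the diagonal the recursion from $AB=I$ would not close on itself. Everything else is one summation and one scalar inequality, so I anticipate no further difficulty; the mild subtlety is simply that \eqref{eq: thuan1418} is to be read for distinct rows $k,l$, as the $2\times 2$ case shows that $|B_{ii}|+|B_{ii}|=2B_{ii}$ can exceed $B_{ii}\,c_{\min}(A)/(c_{\min}(A)-1)$.
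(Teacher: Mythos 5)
Your proof is correct, and it takes a genuinely different route from the paper's. The paper first proves, by a contradiction argument, that the second-largest absolute value in each column of $A^{-1}$ is attained at a \emph{negative} entry, and then runs a second chain of inequalities through $\sum_i A_{ki}A^{-1}_{ij}=0$ to extract $|A^{-1}_{kj}|\le A^{-1}_{jj}/(c_{\min}(A)-1)$ for the off-diagonal entries; both claims are then assembled from that bound. You bypass the sign analysis entirely: reading off the $(j,i)$ entry of $AB=I$ with $B=A^{-1}$ and invoking only the column dominance $|B_{ki}|\le B_{ii}$ from Lemma~\ref{lemma:dominant}(c), you get in one line $|B_{ji}|\le B_{ii}(1-A_{jj})/A_{jj}=B_{ii}/c_j(A)\le B_{ii}/c_{\min}(A)$ for every $j\ne i$. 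This is both simpler and strictly sharper than the paper's intermediate estimate (denominator $c_{\min}(A)$ rather than $c_{\min}(A)-1$), so your conclusions $c_i(A^{-T})\ge c_{\min}(A)/(n-1)$ and $|B_{ki}|+|B_{li}|\le B_{ii}\,(c_{\min}(A)+1)/c_{\min}(A)$ imply the stated \eqref{eq: thuan1419} and \eqref{eq: thuan1418} with room to spare. Your closing observation is also well taken: as written, \eqref{eq: thuan1418} can fail for $k=l=i$ when $c_{\min}(A)>2$, so the inequality must be read for distinct rows $k\ne l$ — the paper's own proof (which pairs the largest column entry with the second largest) implicitly makes the same restriction, and this is all that is used in Proposition~\ref{prop:condition}. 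The only thing your argument forgoes is the structural side information the paper's proof produces along the way (the negativity of the large off-diagonal entries of $A^{-1}$), which is not needed for the lemma itself.
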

\begin{proof}
	The proof is shown in Appendix \ref{sec: appendix for lemma 2}.
\end{proof}

\begin{lemma}
	\label{lemma: singular}
	Let $A \in \mathbf{R}^{n \times n}$ be a strictly diagonally dominant positive matrix, then:
	\begin{equation}
	\label{eq: singular min}
	\max_{i,j}{A^{-1}_{ij}} \leq \dfrac{1}{{\sigma_{\min}(A)}},
	\end{equation}
where $\max_{i,j}{A^{-1}_{ij}}$ is the largest entry in $A^{-1}$ and ${\sigma_{\min}(A)}$ is the minimum singular value of $A$.
\end{lemma}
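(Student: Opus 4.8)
The plan is to bound every entry of $A^{-1}$ by the spectral norm $\|A^{-1}\|_2$ and then identify that spectral norm with $1/\sigma_{\min}(A)$. First, by Lemma~\ref{lemma:dominant}(a), strict diagonal dominance guarantees that $A$ is invertible, so every singular value of $A$ is strictly positive and $\sigma_{\min}(A)>0$; in particular $A^{-1}$ is well defined. Writing a singular value decomposition $A = U\Sigma V^T$ with $\Sigma = \mathrm{diag}(\sigma_1,\dots,\sigma_n)$, we get $A^{-1} = V\Sigma^{-1}U^T$, so the singular values of $A^{-1}$ are exactly the reciprocals $1/\sigma_i$; hence $\|A^{-1}\|_2 = \sigma_{\max}(A^{-1}) = 1/\sigma_{\min}(A)$. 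This is the singular-value analogue of the eigenvalue statement in Lemma~\ref{lemma:dominant}(b).

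Next I would invoke the elementary fact that no entry of a matrix can exceed its operator $2$-norm. For any indices $i,j$ we have $A^{-1}_{ij} = e_i^T A^{-1} e_j$, so by Cauchy--Schwarz and the definition of the operator norm,
\[
|A^{-1}_{ij}| = |e_i^T A^{-1} e_j| \le \|e_i\|_2\,\|A^{-1} e_j\|_2 \le \|e_i\|_2\,\|A^{-1}\|_2\,\|e_j\|_2 = \|A^{-1}\|_2 .
\]
Taking the maximum over $i,j$ and combining with the previous paragraph gives $\max_{i,j}|A^{-1}_{ij}| \le 1/\sigma_{\min}(A)$. Since $\max_{i,j} A^{-1}_{ij} \le \max_{i,j}|A^{-1}_{ij}|$ trivially (and, by Lemma~\ref{lemma:dominant}(c), the maximizing entry is in fact a positive diagonal entry $A^{-1}_{kk}$), the claimed inequality $\max_{i,j} A^{-1}_{ij} \le 1/\sigma_{\min}(A)$ follows.

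There is no serious obstacle here; the only step requiring any care is the identity $\|A^{-1}\|_2 = 1/\sigma_{\min}(A)$, which is immediate from the SVD once invertibility (Lemma~\ref{lemma:dominant}(a)) has been established. An alternative to the Cauchy--Schwarz bound is to note $|A^{-1}_{ij}|^2 \le \sum_{k}(A^{-1}_{ik})^2 \le \|A^{-1}\|_2^2$, using that each squared row norm of $A^{-1}$ is at most $\|A^{-1}\|_2^2$; but the argument above is the most direct.
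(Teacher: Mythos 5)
Your proof is correct and is essentially the paper's argument: both bound entries of $A^{-1}$ by $\sigma_{\max}(A^{-1}) = 1/\sigma_{\min}(A)$, the paper doing so via the Rayleigh quotient of $B=A^{-1}A^{-T}$ evaluated at standard basis vectors, which is exactly the row-norm bound you mention as an alternative at the end. The only substantive difference is that the paper bounds just the diagonal entries $A^{-1}_{ii}$ and then invokes Lemma~\ref{lemma:dominant}(c) to conclude that the maximum entry is diagonal, whereas your Cauchy--Schwarz step bounds every entry directly, so your version never actually uses strict diagonal dominance and proves the inequality for any invertible matrix.
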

\begin{proof}
	The proof is shown in Appendix \ref{sec: appendix for lemma 3}.
\end{proof}

\begin{lemma}
	\label{prop:sum_rowAinv=1}
	Let $A \in \mathbf{R}^{n \times n}$ be an invertible channel matrix, then
	$$A^{-1}\textbf{1} = \textbf{1},$$
	i.e., the sum of any row of $A^{-1}$ equals to 1.  
	Furthermore, for any probability mass vector $x$, sum of the vector $y = {A^{-T}}x$ equal to 1.
\end{lemma}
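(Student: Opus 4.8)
The plan is to exploit the single defining feature of a channel (row-stochastic) matrix: every row of $A$ sums to $1$, which in matrix form says $A\mathbf{1} = \mathbf{1}$, where $\mathbf{1}$ is the all-ones column vector. So $\mathbf{1}$ is a right eigenvector of $A$ with eigenvalue $1$. Everything else follows by elementary manipulation.

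First I would establish $A^{-1}\mathbf{1}=\mathbf{1}$. Starting from $A\mathbf{1}=\mathbf{1}$, left-multiply both sides by $A^{-1}$ (which exists by hypothesis) to get $\mathbf{1} = A^{-1}A\mathbf{1} = A^{-1}\mathbf{1}$. Reading this componentwise, the $j$-th entry of $A^{-1}\mathbf{1}$ is $\sum_{i=1}^n A^{-1}_{ji}$, so each row sum of $A^{-1}$ equals $1$, which is the first claim.

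Next I would handle the statement about $y = A^{-T}x$ for a pmf $x$ (so $\mathbf{1}^T x = 1$). The sum of the entries of $y$ is $\mathbf{1}^T y = \mathbf{1}^T A^{-T} x = \left(A^{-1}\mathbf{1}\right)^T x$, where I used $\mathbf{1}^T A^{-T} = (A^{-1}\mathbf{1})^T$. By the first part $A^{-1}\mathbf{1}=\mathbf{1}$, so $\mathbf{1}^T y = \mathbf{1}^T x = 1$, as required.

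There is no real obstacle here; the only thing to be careful about is bookkeeping of transposes (distinguishing row-sum statements for $A^{-1}$ from the action of $A^{-T}$) and noting explicitly that invertibility of $A$ is what licenses multiplying through by $A^{-1}$. I would present both parts in the two short displays above and conclude.
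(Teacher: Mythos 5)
Your proof is correct and follows exactly the same route as the paper: use row-stochasticity to get $A\mathbf{1}=\mathbf{1}$, left-multiply by $A^{-1}$ for the first claim, then compute $\mathbf{1}^T y = (A^{-1}\mathbf{1})^T x = \mathbf{1}^T x = 1$ for the second. Nothing is missing.
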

\begin{proof}
	The proof is shown in Appendix \ref{appendix for pmf}.
\end{proof}
\section{Main Results}

\label{sec:upperbound}

Our first main result is an upper bound on the capacity of discrete memoryless channels having invertible positive channel matrices. 

\begin{proposition} [Main Result 1]
\label{prop:upperbound}
Let $A \in \mathbf{R}^{n \times n}$ be an invertible positive channel matrix and 
\begin{equation}
\label{eq: closed form for q optimal}
q^*_j = \frac{2^{-K_j}}{\sum_{i=1}^n{2^{-K_i}}},
\end{equation} 
\begin{equation}
p^* = A^{-T}q^*,
\end{equation} then the capacity $C$ associated with the channel matrix $A$ is upper bounded by:
\begin{equation}
\label{eq: upper bound capacity}
C \le -\sum_{j=1}^n{q^*_j\log{q^*_j}} + \sum_{i=1}^{n}\sum_{j=1}^{n} {p_i ^*A_{ij}} \log {A_{ij}}.
\end{equation}
\end{proposition}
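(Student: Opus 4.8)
The plan is to relax the original convex program by dropping the non\-negativity constraint $p \succeq \mathbf{0}$ and keeping only the linear constraint $\mathbf{1}^T p = 1$, so that any optimum of the relaxed problem upper bounds the capacity. Working with the variable $q = A^T p$ instead of $p$ is the natural move here: since $A$ is invertible, $p = A^{-T} q$, and by Lemma~\ref{prop:sum_rowAinv=1} the constraint $\mathbf{1}^T p = 1$ is equivalent to $\mathbf{1}^T q = 1$. Substituting into the objective, the double sum $\sum_{i,j} p_i A_{ij}\log A_{ij} = -\sum_i p_i H(A_i)$ becomes, after writing $p_i = \sum_k A^{-1}_{ik} q_k$ and recalling Definition~\ref{def:K}, a \emph{linear} function of $q$, namely $\sum_j q_j K_j$ up to sign. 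So the relaxed problem reads: minimize $\sum_j q_j \log q_j + \sum_j q_j K_j$ subject to $\mathbf{1}^T q = 1$, a strictly convex problem in $q$ over the affine hyperplane.

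Next I would solve this relaxed problem exactly via the KKT conditions from \eqref{eq:kkt1} (here there are no inequality constraints left, so only the gradient condition and the equality constraint remain). Forming the Lagrangian $L(q,\nu) = \sum_j q_j\log q_j + \sum_j q_j K_j + \nu(\mathbf{1}^T q - 1)$ and setting $\partial L/\partial q_j = \log q_j + \tfrac{1}{\ln 2} + K_j + \nu = 0$ gives $q_j \propto 2^{-K_j}$; imposing $\mathbf{1}^T q = 1$ normalizes this to exactly the claimed $q^*_j = 2^{-K_j}/\sum_i 2^{-K_i}$. Since the relaxed objective is strictly convex and the feasible set is a nonempty affine set on which the objective is coercive in the relevant directions, this stationary point is the unique global minimizer of the relaxation. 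Setting $p^* = A^{-T} q^*$ and substituting $q^*$ back into the objective yields precisely the right\-hand side of \eqref{eq: upper bound capacity}, and because the relaxed feasible region contains the original one, this value is $\ge C$. Finally I would note that $\sum_j q^*_j = 1$ is automatic and, by Lemma~\ref{prop:sum_rowAinv=1}, $\sum_i p^*_i = 1$ as well, so $p^*$ at least respects the normalization even though it need not be a valid pmf.

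The one genuine subtlety — and the step I expect to need the most care — is the algebraic identity that turns $\sum_{i,j} p_i A_{ij}\log A_{ij}$ into a clean linear form in $q$. One has to substitute $p = A^{-T} q$ componentwise, interchange the order of summation, and recognize the inner sum $\sum_i A^{-1}_{ji} H(A_i)$ as exactly $K_j$ from Definition~\ref{def:K}; keeping the signs straight (the objective being $-I(X;Y)$, and $H(A_i)$ carrying its own minus sign) is where a careless error would creep in. Everything else — the change of constraint via Lemma~\ref{prop:sum_rowAinv=1}, the one\-line KKT computation, and the relaxation inequality $C \le (\text{relaxed optimum})$ — is routine. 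Note that this proposition does \emph{not} claim $p^* \succeq \mathbf{0}$; that is precisely the gap addressed by Main Result~2, whose optimality conditions (relating $\sigma_{\min}(A)$ and $c_{\min}(A)$ through Lemmas~\ref{lemma:c_i}--\ref{lemma: singular}) guarantee $p^*$ is a legitimate pmf and hence that the bound is tight.
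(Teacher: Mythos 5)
Your proposal is correct and follows essentially the same route as the paper: change variables to $q=A^Tp$, drop the requirement that $p=A^{-T}q$ be nonnegative so that optimizing over the $q$-simplex is a relaxation, recognize $\sum_{i,j}p_iA_{ij}\log A_{ij}$ as the linear form $-\sum_j q_jK_j$, and solve the resulting stationarity condition to get $q^*_j\propto 2^{-K_j}$. The only cosmetic difference is that the paper retains the multipliers $\lambda_j$ for $q_j\ge 0$ and kills them via complementary slackness, whereas you discard the inequality constraints up front and invoke strict convexity; both yield the same minimizer and the same upper-bound conclusion.
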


\begin{proof}
Let $q$ be the pmf of the output $Y$, then $q = A^{-T}p$. Thus,
\begin{eqnarray}
\label{eq:mutual}
I(X;Y) &=& H(Y)-H(Y|X) \\
&=& -\sum_{j=1}^n{q_j\log{q_j}} + \sum_i^n{(A^{-T}q)_{i}\sum_{k}^n{A_{ik}\log{A_{ik}}}} \nonumber.
\end{eqnarray}

We construct the Lagrangian in  (\ref{sec:lagrangian}) using $-I(X;Y)$ as the objective function and optimization variable $q_j$:
\begin{equation}
L(q_j,\lambda_j,\nu_j)= - I(X;Y) - \sum_{j=1}^{n}{{q_j}{\lambda_j}} + \nu (\sum_{j=1}^{n}{q_j}-1), 
\end{equation}

where the constraints $g(x)$ and $h(x)$ in problem $\textbf{P1}$ are translated into $-{q_j}\leq 0$ and $\sum_{j=1}^{n}{q_j}=1$, respectively.

Using the KKT conditions in  (\ref{eq:kkt1}), the optimal points $q_{j}^{*}$, $\lambda_{j}^{*}$, $\nu^{*}$ for all $j$, must satisfy:  
\begin{eqnarray}
%\label{eq:kkt2} 
%\begin{cases}
q_{j}^{*} \geq 0, \label{eq:cond1} \\
\sum_{j=1}^{n}q_{j}^{*} = 1, \label{eq:cond2} \\
\nu^{*} - \lambda_{j}^{*} - \dfrac{dI(X;Y)}{dq_{j}^{*}}= 0, \label{eq:cond3}\\
\lambda_{j}^{*} \geq 0, \label{eq:cond4}\\
\lambda_{j}^{*} q_{j}^{*}=0. \label{eq:cond5}
%\end{cases}
\end{eqnarray}

Since $0 \leq p_i \leq 1 $ and $\sum_{i=1}^{n}{p_i} = 1$, there exists at least one $p_i > 0$ .  Since $A_{ij} > 0$ $\forall i,j$, we have:
\begin{equation}
\label{eq:q_j}
q^*_j = \sum_{i=1}^{n}p^*_i{A}_{ij} > 0, \forall j.
\end{equation}
Based on (\ref{eq:cond5}) and (\ref{eq:q_j}), we must have $\lambda_j^{*} = 0, \forall j$. Therefore, all five KKT conditions (\ref{eq:cond1}-\ref{eq:cond5}) are reduced to the following two conditions:
\begin{eqnarray}
\sum_{j=1}^{n}q_{j}^{*} = 1, \label{eq:s_cond1}\\
\nu^{*} - \dfrac{dI(X;Y)}{dq_{j}^{*}}= 0. \label{eq:s_cond2}
\end{eqnarray}

Next,
\begin{eqnarray}
\dfrac{dI(X;Y)}{dq_{j}} &=& \sum_{i=1}^{n} {{A}_{ji}^{-1}} \sum_{k=1}^{n} {A}_{ik}\log{A}_{ik}- (1+\log{q_j}) \nonumber\\ 
													&=&-K_j - (1 + \log{q_j}).  \label{eq:K}
\end{eqnarray}

Using (\ref{eq:s_cond2}) and (\ref{eq:K}), we have:
\begin{equation}
\label{eq:q_asterisk}
q_j^* = 2^{{-K}_j-\nu^* - 1}.
\end{equation}

Plugging (\ref{eq:q_asterisk}) to (\ref{eq:s_cond1}), we have:
$$\sum_{j=1}^{n} 2^{{-K}_j-\nu^*-1} = 1,$$
$$\nu^*= \log{\sum_{j=1}^{n}{2^{{-K}_j-1}}}.$$
From (\ref{eq:q_asterisk}), 
\begin{equation}
\label{eq:q}
q_j^* = 2^{{-K}_j-\nu^* -1} = \dfrac{2^{-K_j}}{2^{\nu^*+1}} = \dfrac{2^{-K_j}}{\sum_{j=1}^{n} 2^{{-K}_j}}, \forall j.
\end{equation}
%thus by Lemma \ref{lemma:dominant}, 
%$\sum_i^n{p_i} = 1.$$
If $q^*$ is such that $p^* = A^{-T}q^* \succeq 0$ and $(A^{-T}q^*)^T \mathbf{1}  = \sum_{i}^{n}p_i^*= 1 $, then $p^*$ is a valid p.m.f and Proposition \ref{prop:upperbound} will hold with equality by the KKT conditions. However these two constraints might not hold in general. On the other hand, maximizing $I(X;Y)$ in terms of $q$ and ignoring these constraints is equivalent to enlarging the feasible region, will necessarily yield a value that is at least equal to the capacity $C$.  Thus, by plugging $q^*$ into  (\ref{eq:mutual}), we obtain the proof for the upper bound. 
\end{proof}

Next, we present some sufficient conditions on the channel matrix $A$ such that its capacity can be written in closed form expression. We note that the channel capacity closed form expression is also discovered in \cite{muroga1953capacity} and \cite{robert1990ash} using the input distribution variables. However in both \cite{muroga1953capacity} and \cite{robert1990ash}, the sufficient conditions for closed form expression are not fully characterized.
%interesting upperbound for c_i, but not used in this paper
%\begin{proposition}
%\label{prop:gerghosin_ratio}
%Let $A \in \mathbf{R}^{n \times n}$ be a positive definite matrix, then 
%\begin{equation}
%\frac{1}{c_i(A^{-T})c_i(A)} \geq \frac{1}{A_{ii}A^{-1}_{ii}} -1, \forall i.
%\end{equation}
%\end{proposition}
%
%\begin{proof}
%
%Let $A_i$ and $A^{-T}_i$ denote the row $i$ and column $i$ of $A$ and $A^{-1}$, respectively.  Then,
%\begin{eqnarray}
%1 &=& A_iA^{-T}_i \\
%&=& A_{ii}A^{-1}_{ii} + \sum_{j\neq i}{A_{ij}A^{-1}_{ji}} \\
%& \le & A_{ii}A^{-1}_{ii} + \sum_{j \neq i}{|A_{ij}|}\sum_{j \neq i} {|A^{-1}_{ji}|} \\
%& \le & A_{ii}(A^{-1}_{ii} + \frac{\sum_{j \neq i} {|A^{-1}_{ji}|}}{c_i(A)}). \\
%\end{eqnarray}
%Now, if $A_{ii}(A^{-1}_{ii} + \frac{\sum_{j \neq i} {|A^{-1}_{ji}|}}{c_i(A)})$ is strictly smaller than 1, we have a contradiction.
%Thus, we must have: 
%\begin{equation}
% A_{ii}(A^{-1}_{ii} + \frac{\sum_{j \neq i} {|A^{-1}_{ji}|}}{c_i(A)}) \ge 1. 
%\end{equation}
%Since $A_{ii} > 0$, we have:
%\begin{equation}
%A^{-1}_{ii} + \frac{\sum_{j \neq i} {|A^{-1}_{ji}|}}{c_i(A)} \ge \frac{1}{A_{ii}}.
%\end{equation}
%From condition c in Proposition \ref{prop:dominant}, $A^{-1}_{ii} > 0$, divide both sides by $A^{-1}_{ii}$, we have:
%\begin{eqnarray}
%1+ \frac{1}{c_i(A^{-T})c_i(A)} &\ge& \frac{1}{A_{ii}A^{-1}_{ii}} \\
%\frac{1}{c_i(A^{-T})c_i(A)} &\ge& \frac{1}{A_{ii}A^{-1}_{ii}} -1
%\end{eqnarray}
%\end{proof}

\begin{proposition}[Main Result 2]
\label{prop:c_min_K_max_K_min}
Let $A \in \mathbf{R}^{n \times n}$ be a strictly diagonally dominant positive matrix, if $\forall i$,
\begin{equation}
\label{eq: thinh1}
c_{i}(A^{-T}) \ge (n-1) 2^{K_{\max}-K_{\min}},
\end{equation}
then the capacity of channel matrix $A$ admits a closed form expression which is exactly the upper bound in Proposition \ref{prop:upperbound}.  
\end{proposition}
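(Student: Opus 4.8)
The plan is to show that the condition \eqref{eq: thinh1} guarantees that the candidate vector $p^* = A^{-T}q^*$ constructed in Proposition \ref{prop:upperbound} is in fact a valid probability mass vector, at which point the KKT conditions — which were already verified in the proof of Proposition \ref{prop:upperbound} when the constraints are ignored — certify that the upper bound is attained, hence equals the capacity. By Lemma \ref{prop:sum_rowAinv=1}, the sum of the components of $p^* = A^{-T}q^*$ is automatically $1$ whenever $q^*$ is a pmf, which it is by construction (see \eqref{eq: closed form for q optimal}). So the only thing left to prove is nonnegativity: $p^*_j = \sum_i A^{-1}_{ij} q^*_i \ge 0$ for every $j$.

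First I would write $p^*_j = \sum_{i} A^{-1}_{ij} q^*_i = A^{-1}_{jj} q^*_j + \sum_{i \neq j} A^{-1}_{ij} q^*_i$ and bound the (possibly negative) off-diagonal sum from below. Using the triangle inequality, $p^*_j \ge A^{-1}_{jj} q^*_j - \sum_{i \neq j} |A^{-1}_{ij}| q^*_i$. Since $q^*$ is proportional to $(2^{-K_1},\dots,2^{-K_n})$, we have $q^*_i / q^*_j = 2^{K_j - K_i} \le 2^{K_{\max} - K_{\min}}$ for all $i,j$; dividing through by $q^*_j > 0$, nonnegativity of $p^*_j$ follows once we show
$$
A^{-1}_{jj} \ge \Big(\sum_{i \neq j} |A^{-1}_{ij}|\Big)\, 2^{K_{\max}-K_{\min}}.
$$
Now recall from Definition \ref{def:gershgorin} that $c_j(A^{-T})$ is the $(j,j)$ entry of $A^{-T}$ over its $j$-th Gershgorin radius; since $(A^{-T})_{jk} = A^{-1}_{kj}$, the $j$-th row of $A^{-T}$ consists of the entries of the $j$-th column of $A^{-1}$, so $c_j(A^{-T}) = A^{-1}_{jj} / \sum_{i \neq j} |A^{-1}_{ij}|$ (the diagonal entry is positive by Lemma \ref{lemma:dominant}(c), and is indeed the largest in absolute value in that column). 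Hence the displayed inequality is exactly $c_j(A^{-T}) \ge 2^{K_{\max}-K_{\min}}$, which is implied by the hypothesis \eqref{eq: thinh1} since $n - 1 \ge 1$. This gives $p^* \succeq \mathbf{0}$.

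With $p^* \succeq \mathbf{0}$ and $\mathbf{1}^T p^* = 1$ established, $p^*$ is feasible for the original capacity optimization, and the KKT stationarity computation already carried out in Proposition \ref{prop:upperbound} (equations \eqref{eq:s_cond1}--\eqref{eq:q}) shows that the corresponding $q^*$ satisfies all KKT conditions of the (concave) mutual-information maximization; by sufficiency of KKT for convex problems, $p^*$ is optimal and the bound \eqref{eq: upper bound capacity} holds with equality, giving the closed-form capacity. The main obstacle — really the only substantive step — is the column-sum bound of the previous paragraph: correctly identifying the Gershgorin ratio of $A^{-T}$ with the ratio of the diagonal to the column-sum of the off-diagonal magnitudes of $A^{-1}$, and checking that Lemma \ref{lemma:dominant}(c) makes the absolute values behave as needed. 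I would double-check whether the factor $(n-1)$ in \eqref{eq: thinh1} is genuinely needed here or is a slack artifact; the argument above only seems to require $c_j(A^{-T}) \ge 2^{K_{\max}-K_{\min}}$, and the stronger stated hypothesis may be there to connect with the singular-value/Gershgorin estimates of Lemmas \ref{lemma:c_i}--\ref{lemma: singular} used elsewhere.
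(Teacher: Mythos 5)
Your proposal is correct and follows the same overall strategy as the paper: reduce the claim to showing $p^* = A^{-T}q^* \succeq 0$ (the normalization $\mathbf{1}^Tp^*=1$ being automatic from Lemma \ref{prop:sum_rowAinv=1}), split $p^*_j$ into its diagonal and off-diagonal contributions, and compare with the Gershgorin ratio $c_j(A^{-T}) = A^{-1}_{jj}/\sum_{i\neq j}|A^{-1}_{ij}|$. The one genuine difference is in how the off-diagonal sum is bounded, and your version is sharper. The paper estimates $\sum_{i\neq j} q^*_i |A^{-1}_{ij}| \le \bigl(\sum_{i\neq j} q^*_i\bigr)\bigl(\sum_{i\neq j}|A^{-1}_{ij}|\bigr) \le (n-1)\,q^*_{\max}\sum_{i\neq j}|A^{-1}_{ij}|$, which is where the factor $(n-1)$ enters; you instead bound term by term, $q^*_i \le q^*_j\,2^{K_{\max}-K_{\min}}$, and conclude that $c_j(A^{-T}) \ge 2^{K_{\max}-K_{\min}}$ already suffices. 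So your suspicion at the end is right: the $(n-1)$ in \eqref{eq: thinh1} is slack in this step of the paper's argument, and your computation proves the proposition under a strictly weaker hypothesis (which of course still covers the stated one). The only caveat worth keeping in mind is that weakening \eqref{eq: thinh1} would propagate into Proposition \ref{prop:condition} and its corollaries, where the paper deliberately chains this condition with the cruder bound of Lemma \ref{lemma:c_i}; as a proof of Proposition \ref{prop:c_min_K_max_K_min} as stated, yours is complete and correct.
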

\begin{proof}
Based on the discussion of the KKT conditions, it is sufficient to show that if $p^* = A^{-T}q^* \succeq 0$ and $\sum_{i}^{n}p_i^* = (A^{-T}q^*)^T \mathbf{1}  = 1 $ then $C$ has a closed form expression.  The condition $(A^{-T}q^*)^T \mathbf{1}  = 1$ is always true as shown in Lemma \ref{prop:sum_rowAinv=1} in the Appendix \ref{appendix for pmf}. Thus, we only need to show that if $c_{i}(A^{-T}) \ge 2^{K_{\max}-K_{\min}}$,  then $p^*=A^{-T}q^* \succeq 0.$

Let $q^*_{\min} = \min_j{q^*_j}$ and $q^*_{\max} = \max_j{q^*_j}$, we have:
\begin{eqnarray}
p^*_i &=& \sum_j{q^*_jA^{-1}_{ji}}\nonumber\\
&=& q^*_iA_{ii}^{-1} + \sum_{j\neq i}{q^*_jA^{-1}_{ji}}\nonumber \\
&\ge& q^*_{\min}A^{-1}_{ii} - (\sum_{j\neq i} q^*_j)(\sum_{j \neq i}{|A^{-1}_{ji}|})\label{eq: 1000}\\
&\ge& q^*_{\min}A^{-1}_{ii} - (n-1)q^*_{\max}(\sum_{j \neq i}{|A^{-1}_{ji}|})\label{eq: 1001} ,
\end{eqnarray} 
with (\ref{eq: 1000}) due to $A_{ii}^{-1} >0$ which follows by Lemma \ref{lemma:dominant}-$c$, (\ref{eq: 1001}) is due to $q^*_{\max} \geq q_j^*$ $\forall$ $j$. 
Now if we want $p^*_i \ge 0,$ $\forall$ $i$, from  (\ref{eq: 1001}), it is sufficient to require that, $\forall i$,
\begin{eqnarray}
c_i(A^{-T})= \frac{A^{-1}_{ii}}{\sum_{j \neq i}{|A^{-1}_{ji}|}} &\geq& \frac{(n-1)q^*_{\max}}{q^*_{\min}} \nonumber\\
&=&(n-1)\dfrac{\dfrac{2^{-K_{\min}}}{\sum_{j=1}^{n} 2^{{-K}_j}}}{\dfrac{2^{-K_{\max}}}{\sum_{j=1}^{n} 2^{{-K}_j}}} \label{eq: simple 2}\\
&=&(n-1)2^{K_{\max}-K_{\min}} \nonumber,
\end{eqnarray}
with (\ref{eq: simple 2}) due to (\ref{eq:q}) and $q^*_{\max}$,  $q^*_{\min}$ are corresponding to $K_{\min}$, $K_{\max}$, respectively.
Thus, Proposition \ref{prop:c_min_K_max_K_min} is proven. 

\end{proof}
We are now ready to state and prove the third main result that characterizes the sufficient conditions on a channel 
matrix so that the upper bound in Proposition \ref{prop:upperbound} is precisely the capacity.

\begin{proposition}[Main Result 3]
\label{prop:condition}
Let $A \in \mathbf{R}^{n \times n}$ be a  strictly diagonally dominant positive channel matrix and $H_{\max}(A)$ be the maximum row entropy of $A$. The capacity $C$ is the upper bound in Proposition \ref{prop:upperbound} i.e., hold with equality if 
%\begin{equation}
%\sqrt[\leftroot{-2}\uproot{2} C ]{\dfrac{c_{\min}(A)-1}{n-1}} \geq 2^{\frac{n  H_{\max}^*(A)}{{\sigma^*}}}.
%\end{equation}
\begin{large} 
\begin{equation}
\sqrt[\leftroot{-2}\uproot{2} V ]{\dfrac{c_{\min}(A)-1}{(n-1)^2}} \geq 2^{\frac{n H_{\max}(A)}{{\sigma_{\min}(A)}}},
\label{eq:eigen1}
\end{equation}
\end{large}
where ${\sigma_{\min}(A)}$ is the minimum singular value of channel matrix $A$, and
\begin{equation}
V=\dfrac{c_{\min}(A)}{c_{\min}(A)-1}.
\end{equation}
\end{proposition}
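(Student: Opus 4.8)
The plan is to derive (\ref{eq:eigen1}) from the sufficient condition (\ref{eq: thinh1}) of Proposition \ref{prop:c_min_K_max_K_min}, by controlling the quantities $c_i(A^{-T})$, $K_{\max}-K_{\min}$ and $V$ with Lemmas \ref{lemma:dominant}--\ref{lemma: singular}. First note that since $A$ is a strictly diagonally dominant positive channel matrix each row is stochastic, so $A_{ii}>\sum_{j\neq i}A_{ij}=1-A_{ii}$ gives $A_{ii}>1/2$; hence $c_{\min}(A)=\min_i A_{ii}/(1-A_{ii})>1$ and $V=c_{\min}(A)/(c_{\min}(A)-1)>1$. In particular every expression below is positive and the $V$-th root in (\ref{eq:eigen1}) is well defined and order preserving.

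Next, recall from the proof of Proposition \ref{prop:upperbound} that the bound (\ref{eq: upper bound capacity}) holds with equality whenever $p^*=A^{-T}q^*\succeq\mathbf{0}$; the remaining KKT requirement $(A^{-T}q^*)^T\mathbf{1}=1$ is automatic by Lemma \ref{prop:sum_rowAinv=1}. Moreover, the chain of inequalities (\ref{eq: 1000})--(\ref{eq: simple 2}) in the proof of Proposition \ref{prop:c_min_K_max_K_min} shows that $p^*\succeq\mathbf{0}$ is guaranteed once $c_i(A^{-T})\ge (n-1)2^{K_{\max}-K_{\min}}$ for all $i$, i.e. once (\ref{eq: thinh1}) holds. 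Combining this with the lower bound $c_i(A^{-T})\ge (c_{\min}(A)-1)/(n-1)$ of Lemma \ref{lemma:c_i}, equation (\ref{eq: thuan1419}), it therefore suffices to establish
\[
\frac{c_{\min}(A)-1}{(n-1)^2}\;\ge\;2^{\,K_{\max}-K_{\min}}.
\]

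It remains to upper bound $K_{\max}-K_{\min}$. Writing $K_{\max}=K_k$, $K_{\min}=K_l$ and using $H(A_i)\ge 0$ together with the triangle inequality,
\[
K_{\max}-K_{\min}=\sum_{i=1}^{n}\bigl(A^{-1}_{ki}-A^{-1}_{li}\bigr)H(A_i)\;\le\;\sum_{i=1}^{n}\bigl(|A^{-1}_{ki}|+|A^{-1}_{li}|\bigr)H(A_i).
\]
Now I would apply Lemma \ref{lemma:c_i}, equation (\ref{eq: thuan1418}), to replace $|A^{-1}_{ki}|+|A^{-1}_{li}|$ by $A^{-1}_{ii}V$, then bound $H(A_i)\le H_{\max}(A)$, and finally use Lemma \ref{lemma:dominant}-$c$ together with Lemma \ref{lemma: singular} to get $A^{-1}_{ii}\le\max_{i,j}A^{-1}_{ij}\le 1/\sigma_{\min}(A)$. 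This chains to
\[
K_{\max}-K_{\min}\;\le\;V\,H_{\max}(A)\sum_{i=1}^{n}A^{-1}_{ii}\;\le\;\frac{n\,V\,H_{\max}(A)}{\sigma_{\min}(A)}.
\]
Hence it is enough to have $\tfrac{c_{\min}(A)-1}{(n-1)^2}\ge 2^{\,nVH_{\max}(A)/\sigma_{\min}(A)}=\bigl(2^{\,nH_{\max}(A)/\sigma_{\min}(A)}\bigr)^{V}$, and taking $V$-th roots on both sides (order preserving since $V>0$) yields exactly (\ref{eq:eigen1}).

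I do not anticipate a hard step: the argument is a sequence of elementary estimates glued onto the earlier lemmas. The points needing care are (i) checking $c_{\min}(A)>1$ so that $V>1$ and the $V$-th root is legitimate and monotone; (ii) using $H(A_i)\ge 0$ and $A^{-1}_{ii}>0$ (Lemma \ref{lemma:dominant}-$c$) so that replacing $H(A_i)$ by $H_{\max}(A)$ and dropping signs via Lemma \ref{lemma:c_i} only enlarges the bound; and (iii) tracking the crude estimate $\sum_{j\neq i}q^*_j\le (n-1)q^*_{\max}$ inherited from Proposition \ref{prop:c_min_K_max_K_min}, which is precisely what produces the $(n-1)^2$ in the denominator on the left of (\ref{eq:eigen1}).
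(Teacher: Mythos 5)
Your proposal is correct and follows essentially the same route as the paper: reduce to the sufficient condition of Proposition \ref{prop:c_min_K_max_K_min}, lower-bound $c_i(A^{-T})$ via (\ref{eq: thuan1419}), upper-bound $K_{\max}-K_{\min}$ by $nVH_{\max}(A)/\sigma_{\min}(A)$ using (\ref{eq: thuan1418}) and Lemma \ref{lemma: singular}, and take $V$-th roots. Your explicit checks that $c_{\min}(A)>1$, $V>1$, and that the root is order-preserving are welcome additions the paper leaves implicit.
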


\begin{proof}
From (\ref{eq: thuan1419}) in Lemma \ref{lemma:c_i} and Proposition \ref{prop:c_min_K_max_K_min}, if we can show that
\begin{equation}
\label{eq: ha2412}
\frac{c_{\min}(A)-1}{(n-1)} \geq (n-1)2^{K_{\max}-K_{\min}},
\end{equation}
then Proposition \ref{prop:condition} is proven. 
 Suppose that $K_{\max}$ and $K_{\min}$ are obtained  at rows $j=L$ and $j=S$, respectively. We note that from (\ref{eq:q}), $q_{\max}=\max_{j}q_j$ and $q_{\min}=\min_{j}q_j$ correspond to $K_{\min}$ and $K_{\max}$, respectively. Thus, from the Definition \textbf{1}, we have:
\begin{small}
\begin{eqnarray}
K_{\max} \!-\! K_{\min} \!&=\!& {\sum_{i=1}^n{A^{-1}_{Li}H(A_i)}} -{\sum_{i=1}^n{A^{-1}_{Si}H(A_i)}} \nonumber \\
&\leq& |{\sum_{i=1}^n{A^{-1}_{Li}H(A_i)}}| + |{\sum_{i=1}^n{A^{-1}_{Si}H(A_i)}}|\label{eq: thuan1412}\\
&\le& |{\sum_{i=1}^n{A^{-1}_{Li}| |H(A_i)}}| + |{\sum_{i=1}^n{A^{-1}_{Si}| |H(A_i)}}| \label{eq: thuan1413}\\
&\leq & H_{\max}(A)  \sum_{i=1}^n (|A^{-1}_{Li}|+|A^{-1}_{Si}| ) \label{eq: thuan1414}\\
&\leq & H_{\max}(A) \sum_{i=1}^n A_{ii}^{-1}\dfrac{c_{\min}(A)}{c_{\min}(A)-1}\label{eq: thuan1415}\\
&\le & nH_{\max}(A)  (\max_{i,j}{A^{-1}_{ij}}) \dfrac{c_{\min}(A)}{c_{\min}(A)-1}  \label{eq: 105} \\
& \le & \frac{nH_{\max}(A)V}{{\sigma_{\min}(A)}} \label{eq: thinh},
\label{eq:lambda}
\end{eqnarray}
\end{small}
where  (\ref{eq: thuan1412}) due to the property of absolute value function, (\ref{eq: thuan1413})  due to Schwarz inequality, (\ref{eq: thuan1414}) due to $H_{\max}(A)$ is the maximum row entropy of $A$,  (\ref{eq: thuan1415}) due to   (\ref{eq: thuan1418}),  (\ref{eq: 105}) due to $\max_{i,j}{A^{-1}_{ij}}$ is the largest entry in $A^{-1}$ and  (\ref{eq:lambda}) is due to Lemma \ref{lemma: singular}. Thus,
\begin{equation}
\label{eq: ha2413}
(n-1)2^{\frac{nH_{\max}(A)V}{{\sigma_{\min}(A)}}} \geq (n-1)2^{K_{\max}-K_{\min}}.
\end{equation}

From (\ref{eq: ha2412}) and (\ref{eq: ha2413}), if 
 \begin{equation}
 \label{eq: nganha2414}
\frac{c_{\min}(A)-1}{(n-1)} \geq (n-1)2^{\frac{nH_{\max}(A)V}{{\sigma_{\min}(A)}}},
\end{equation} 
then the capacity $C$ is the upper bound in Proposition \ref{prop:upperbound}.   (\ref{eq: nganha2414}) is equivalent to (\ref{eq:eigen1}). Thus Proposition \ref{prop:condition} is proven. 
\end{proof}

An easy to use version of Proposition \ref{prop:condition} is stated in Corollary \ref{corollary 1}. 

\begin{corollary}
\label{corollary 1}
The capacity $C$ is the upper bound in Proposition \ref{prop:upperbound} if  
\begin{large} 
\begin{equation}
\dfrac{c_{\min}(A)-1}{(n-1)^2} \geq 2^{\frac{2 n \log n}{{\sigma_{\min}(A)}}}.
\end{equation}
\end{large}
\end{corollary}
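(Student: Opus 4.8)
The plan is to derive Corollary \ref{corollary 1} as a direct specialization of Proposition \ref{prop:condition}, by replacing every channel-specific quantity in the sufficient condition (\ref{eq:eigen1}) with a cruder bound that depends only on $n$, $c_{\min}(A)$, and $\sigma_{\min}(A)$, and then checking that the resulting inequality is \emph{stronger} than (\ref{eq:eigen1}). Concretely, I would first bound the exponent on the right-hand side: since $A$ is a stochastic matrix, each row $A_i$ is a probability vector on $n$ symbols, so $H_{\max}(A) \le \log n$. Substituting this into $2^{n H_{\max}(A)/\sigma_{\min}(A)}$ shows it is at most $2^{n\log n/\sigma_{\min}(A)}$.

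The second step is to handle the left-hand side, where the subtlety is the $V$-th root, with $V = c_{\min}(A)/(c_{\min}(A)-1)$. For a strictly diagonally dominant positive matrix we have $c_{\min}(A) > 1$, hence $V > 1$, and moreover $V \le 2$ is \emph{not} automatic — but what matters is the direction of the inequality. Since the quantity $\frac{c_{\min}(A)-1}{(n-1)^2}$ under the root may be less than $1$ (indeed it typically is), raising it to the power $1/V$ with $V>1$ \emph{increases} it, so $\sqrt[V]{\frac{c_{\min}(A)-1}{(n-1)^2}} \ge \frac{c_{\min}(A)-1}{(n-1)^2}$ only when the base is $\le 1$; if the base exceeds $1$ the inequality reverses. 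I would therefore argue that the corollary's hypothesis $\frac{c_{\min}(A)-1}{(n-1)^2} \ge 2^{2n\log n/\sigma_{\min}(A)} \ge 1$ forces the base to be at least $1$, and then invoke $V \le 2$: for a positive stochastic matrix with strict diagonal dominance, $c_{\min}(A) = \min_i A_{ii}/(1-A_{ii})$ and $A_{ii} > 1/2$ would be needed for $V < 2$; since this need not hold, I would instead observe that whenever the base $b := \frac{c_{\min}(A)-1}{(n-1)^2} \ge 1$ we have $b^{1/V} \ge b^{1/2} \ge \big(2^{2n\log n/\sigma_{\min}(A)}\big)^{1/2} = 2^{n\log n/\sigma_{\min}(A)}$, using $V \le 2$ only in the exponent comparison. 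The cleanest route: the corollary hypothesis gives $b \ge 2^{2n\log n/\sigma_{\min}(A)}$; taking square roots (valid since both sides positive) yields $\sqrt{b} \ge 2^{n\log n/\sigma_{\min}(A)} \ge 2^{n H_{\max}(A)/\sigma_{\min}(A)}$; and since $b \ge 1$ and $1 < V \le 2$, we have $b^{1/V} \ge b^{1/2} = \sqrt{b}$, so $b^{1/V} \ge 2^{n H_{\max}(A)/\sigma_{\min}(A)}$, which is exactly (\ref{eq:eigen1}).

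So the chain of implications I would write is: corollary hypothesis $\Rightarrow$ $\sqrt{b} \ge 2^{n\log n/\sigma_{\min}(A)} \ge 2^{n H_{\max}(A)/\sigma_{\min}(A)}$ (using $H_{\max}(A) \le \log n$) $\Rightarrow$ (since $b \ge 1$ and $V \le 2$) $\sqrt[V]{b} \ge 2^{n H_{\max}(A)/\sigma_{\min}(A)}$ $\Rightarrow$ condition (\ref{eq:eigen1}) of Proposition \ref{prop:condition} holds $\Rightarrow$ $C$ equals the upper bound in Proposition \ref{prop:upperbound}. I expect the main obstacle to be the bookkeeping around the $V$-th root: one must be careful that the monotonicity step $b^{1/V} \ge b^{1/2}$ genuinely requires $b \ge 1$ (which the corollary's own hypothesis supplies, since the right-hand side $2^{2n\log n/\sigma_{\min}(A)} > 1$), and that $V \le 2$ holds, which follows from $c_{\min}(A) \ge 1$ — in fact for strict diagonal dominance $c_{\min}(A) > 1$ gives $V$ finite, and $c_{\min}(A) \ge 1 \Rightarrow V = 1 + \frac{1}{c_{\min}(A)-1}$, so $V \le 2 \iff c_{\min}(A) \ge 2 \iff A_{ii} \ge 2/3$; if that fails I would instead not claim $V \le 2$ but argue directly that for $b \ge 1$, $b^{1/V} \ge b^{1/V}$ is still $\ge 1$ and compare exponents by noting $1/V \ge (c_{\min}(A)-1)/c_{\min}(A)$, absorbing the slack into the generous constant $2n\log n$ versus $nH_{\max}(A) \le n\log n$. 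I would present whichever version keeps the statement clean, flagging that the factor-of-two gap between $\log n$ and $2\log n$ in the exponent is precisely the room that lets the $V$-root and the $H_{\max}$ bound be discharged simultaneously.
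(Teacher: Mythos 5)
Your proposal is correct, but it reaches Corollary \ref{corollary 1} by a genuinely different route than the paper. You treat Proposition \ref{prop:condition} as a black box and show that the corollary's hypothesis implies its condition (\ref{eq:eigen1}): writing $b:=\frac{c_{\min}(A)-1}{(n-1)^2}$, the hypothesis gives $b\ge 2^{2n\log n/\sigma_{\min}(A)}\ge 1$, hence $c_{\min}(A)\ge 1+(n-1)^2\ge 2$ for $n\ge 2$, hence $V=\frac{c_{\min}(A)}{c_{\min}(A)-1}\le 2$, and therefore $b^{1/V}\ge b^{1/2}\ge 2^{n\log n/\sigma_{\min}(A)}\ge 2^{nH_{\max}(A)/\sigma_{\min}(A)}$ using $H_{\max}(A)\le\log n$. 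That chain is sound; the only point you leave slightly hedged --- whether $V\le 2$ actually holds --- is in fact guaranteed by the hypothesis itself via the $b\ge 1$ observation you already made, so no fallback argument is needed. The paper instead goes back inside the proof of Proposition \ref{prop:condition}: it re-bounds $K_{\max}-K_{\min}$ by replacing the estimate (\ref{eq: thuan1418}) (which carries the factor $V$) with the cruder $|A^{-1}_{Li}|+|A^{-1}_{Si}|\le 2\max_{i,j}A^{-1}_{ij}\le 2/\sigma_{\min}(A)$, obtaining $K_{\max}-K_{\min}\le 2n\log n/\sigma_{\min}(A)$ and substituting directly into (\ref{eq: nganha2414}); the factor $V$ never appears, which is why the corollary has no $V$-th root. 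Your version buys a clean ``corollary-as-logical-consequence'' derivation from the statement of Proposition \ref{prop:condition} alone (at the cost of having to manage the $V$-th root and verify $V\le 2$); the paper's version buys simpler bookkeeping and makes transparent that the factor $2$ in the exponent is what replaces $V$, whereas in your argument that same factor $2$ is instead consumed by the square root. Both land on the identical sufficient condition.
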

\begin{proof}
Similar to Proposition \ref{prop:condition},
\begin{small}
\begin{eqnarray}
K_{\max} \!-\! K_{\min} &=& {\sum_{i=1}^n{A^{-1}_{Li}H(A_i)}} -{\sum_{i=1}^n{A^{-1}_{Si}H(A_i)}} \nonumber \\
&\leq& |{\sum_{i=1}^n{A^{-1}_{Li}H(A_i)}}| + |{\sum_{i=1}^n{A^{-1}_{Si}H(A_i)}}| \label{eq: vy1413}\\
&\le& |{\sum_{i=1}^n{A^{-1}_{Li}| |H(A_i)}}| + |{\sum_{i=1}^n{A^{-1}_{Si}| |H(A_i)}}|  \label{eq: vy1414}\\
&\leq & H_{\max}(A)  \sum_{i=1}^n (|A^{-1}_{Li}|+|A^{-1}_{Si}| )  \label{eq: vy1415}\\
&\le & H_{\max}(A)  n (2 \max_{i,j}{A^{-1}_{ij}})   \label{eq: thuanha1}\\
& \le & \frac{2 n \log n}{{\sigma_{\min}(A)}} \label{eq: corollary 1},
\end{eqnarray}
\end{small}
with (\ref{eq: vy1413}), (\ref{eq: vy1414}), (\ref{eq: vy1415})   are similar to (\ref{eq: thuan1412}), (\ref{eq: thuan1413}), (\ref{eq: thuan1414}), respectively. 
(\ref{eq: thuanha1}) is due to $\max_{i,j}{A^{-1}_{ij}}$ is the largest entry in $A^{-1}$,  (\ref{eq: corollary 1}) due to $ H_{\max}(A) \leq \log n$ and Lemma \ref{lemma: singular}.  
Thus, by changing $\frac{nH_{\max}(A)V}{{\sigma_{\min}(A)}}$ in  (\ref{eq: nganha2414}) by $\frac{2 n \log n}{{\sigma_{\min}(A)}}$, the Corollary \ref{corollary 1} is proven.
\end{proof}

%The Corollary \ref{corollary 1} is quite similar to Proposition \ref{prop:condition} by changing $V$ to $2$. Thus,
% if $c_{\min}(A) \geq 2$,  the upper bound in  (\ref{eq: corollary 1}) is looser than the upper bound in  (\ref{eq: thinh}) due to $V=\dfrac{c_{\min}(A)}{c_{\min}(A)-1} \leq 2$. However if $c_{\min}(A) < 2$, then  $V=\dfrac{c_{\min}(A)}{c_{\min}(A)-1} \geq 2$, and  the Corollary \ref{corollary 1} is a tighter checking condition. 

%Next, by construction the lower bound for $\sigma_{\min}(A)$ and the upper bound for $H_{\max}(A)$, the following Corollary \ref{corollary 2} is established.

A direct result of Proposition \ref{prop:condition} without using singular value is shown in Corollary \ref{corollary 2}. 

\begin{corollary}
\label{corollary 2}
The capacity $C$ is the upper bound in Proposition \ref{prop:upperbound} if  
\begin{large}
\begin{equation}
\sqrt[\leftroot{-2}\uproot{2} V ]{\dfrac{c_{\min}(A)-1}{(n-1)^2}} \geq 2^{\frac{n  H_{\max}^*(A)}{{\sigma^*}}},
\label{eq:eigen2}
\end{equation}
\end{large}

where,
\begin{equation}
V=\dfrac{c_{\min}(A)}{c_{\min}(A)-1},
\end{equation}
\begin{equation}
\sigma^*=\dfrac{c_{\min}(A) -n/2}{c_{\min}(A)+1},
\end{equation}
\begin{small}
\begin{equation}
H_{\max}^*(A)=\log(c_{\min}(A)+1)+\dfrac{\log(n-1)-c_{\min}(A)\log c_{\min}(A)}{c_{\min}(A)+1}.
\end{equation}
\end{small}
\end{corollary}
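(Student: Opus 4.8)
The plan is to obtain Corollary~\ref{corollary 2} directly from Proposition~\ref{prop:condition} by bounding, in terms of $c_{\min}(A)$ and $n$ alone, the two channel-dependent quantities $\sigma_{\min}(A)$ and $H_{\max}(A)$ appearing in~(\ref{eq:eigen1}). Writing $c=c_{\min}(A)$, it suffices to establish
$$\sigma_{\min}(A)\ \ge\ \sigma^*=\frac{c-n/2}{c+1}\ >\ 0,\qquad H_{\max}(A)\ \le\ H_{\max}^*(A),$$
because then $\frac{nH_{\max}(A)}{\sigma_{\min}(A)}\le\frac{nH_{\max}^*(A)}{\sigma^*}$, hence $2^{nH_{\max}(A)/\sigma_{\min}(A)}\le 2^{nH_{\max}^*(A)/\sigma^*}$; since $V=c/(c-1)>0$ (strict diagonal dominance forces $A_{ii}>1-A_{ii}$, i.e.\ $c>1$) and $t\mapsto\sqrt[V]{t}$ is increasing on the positive reals, the hypothesis~(\ref{eq:eigen2}) then implies the hypothesis~(\ref{eq:eigen1}), and Proposition~\ref{prop:condition} finishes the argument. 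Two elementary consequences of $A$ being a strictly diagonally dominant stochastic matrix will be used repeatedly: since $t\mapsto t/(1-t)$ is increasing, $c_i(A)=A_{ii}/(1-A_{ii})\ge c$ gives $A_{ii}\ge c/(c+1)$ for every $i$; and $A_{jj}+A_{ji}\le\sum_k A_{jk}=1$ gives $A_{ji}\le 1-A_{jj}\le 1/(c+1)$ for all $i\neq j$.

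For the entropy bound, fix a row $i$. Among probability vectors on $n$ symbols one of whose coordinates is at least $\alpha:=c/(c+1)$, entropy is maximised by putting that coordinate exactly at $\alpha$ and distributing the remaining mass uniformly over the other $n-1$ coordinates: if the pinned coordinate equals $\beta$, the most entropy one can achieve is $g(\beta):=-\beta\log\beta-(1-\beta)\log\frac{1-\beta}{n-1}$, and $g'(\beta)=\log\frac{1-\beta}{\beta(n-1)}<0$ once $\beta>1/n$, which holds here since $c>1$ implies $\alpha>1/n$. Hence $H(A_i)\le g(A_{ii})\le g(\alpha)$, and a short simplification of $g(\alpha)$ using $1-\alpha=1/(c+1)$ gives exactly $\log(c+1)+\frac{\log(n-1)-c\log c}{c+1}=H_{\max}^*(A)$, so $H_{\max}(A)=\max_i H(A_i)\le H_{\max}^*(A)$.

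For the singular-value bound I would use the elementary fact $\sigma_{\min}(A)\ge\lambda_{\min}\!\bigl(\tfrac{A+A^{T}}{2}\bigr)$ — for any unit vector $x$, $\|Ax\|\ge x^{T}Ax=x^{T}\tfrac{A+A^{T}}{2}x$ by Cauchy--Schwarz — and then apply Gershgorin's theorem to the symmetric matrix $M=\tfrac{A+A^{T}}{2}$. The $i$-th diagonal entry of $M$ is $A_{ii}\ge c/(c+1)$, and the $i$-th off-diagonal absolute row sum is $\tfrac12\sum_{j\neq i}A_{ij}+\tfrac12\sum_{j\neq i}A_{ji}\le\tfrac12\cdot\tfrac{1}{c+1}+\tfrac12\cdot\tfrac{n-1}{c+1}=\tfrac{n}{2(c+1)}$, using the row-sum identity $\sum_{j\neq i}A_{ij}=1-A_{ii}\le 1/(c+1)$ and the per-entry bound $A_{ji}\le 1/(c+1)$. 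Thus every Gershgorin disk of $M$ lies to the right of $\tfrac{c}{c+1}-\tfrac{n}{2(c+1)}=\sigma^*$, so $\lambda_{\min}(M)\ge\sigma^*$ and therefore $\sigma_{\min}(A)\ge\sigma^*$.

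The step I expect to be the main obstacle is the lower bound on $\sigma_{\min}(A)$: Lemma~\ref{lemma: singular} controls $\sigma_{\min}(A)$ only through $1/\max_{i,j}A^{-1}_{ij}$ and so points the wrong way, so one must supply the symmetric-part/Gershgorin estimate above, being careful that the column-type sum $\sum_{j\neq i}A_{ji}$ — which stochasticity does not bound directly — is controlled via the diagonal entries of $A$. A minor caveat worth stating in the write-up is that the corollary is only meaningful when $\sigma^*>0$, i.e.\ $c_{\min}(A)>n/2$; if $\sigma^*\le 0$ the implication chain cannot be run, so we may (and do) assume $c_{\min}(A)>n/2$ throughout.
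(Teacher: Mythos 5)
Your proposal is correct and follows the same overall strategy as the paper's proof (via Lemma \ref{lemma: 4} in the appendix): replace the two channel-dependent quantities in Proposition \ref{prop:condition} by the bounds $H_{\max}(A)\le H_{\max}^*(A)$ and $\sigma_{\min}(A)\ge\sigma^*$, both expressed through $c_{\min}(A)$ and $n$, and then let monotonicity of $t\mapsto 2^{t}$ and of the $V$-th root carry (\ref{eq:eigen2}) into (\ref{eq:eigen1}). Your entropy bound is the paper's argument almost verbatim (pin $A_{kk}\ge c_{\min}(A)/(c_{\min}(A)+1)$, spread the rest uniformly, use monotonicity of $g$); your monotonicity threshold $\beta>1/n$ is in fact the correct one, whereas the paper invokes the stronger sufficient condition $\beta/(1-\beta)\ge n-1$, so no issue there. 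The one genuine divergence is the singular-value bound: the paper simply cites Gershgorin-type lower bounds for singular values from Johnson and from Hong--Pan ((\ref{eq: upper bound singular 1}) and (\ref{eq: upper bound singular 2})), whereas you rederive the weaker of the two, $\sigma_{\min}(A)\ge\min_k A_{kk}-\tfrac12(R_k(A)+C_k(A))$, from first principles via $\|Ax\|\ge x^{T}Ax$ and Gershgorin applied to $\tfrac12(A+A^{T})$; this makes the proof self-contained at no cost, since both routes land on the same value $\sigma^*$ after inserting $R_k(A)\le 1/(c_{\min}(A)+1)$ and $C_k(A)\le (n-1)/(c_{\min}(A)+1)$. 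Your explicit caveat that the chain requires $\sigma^*>0$, i.e.\ $c_{\min}(A)>n/2$, is a point the paper leaves implicit (it tacitly uses $nH_{\max}^*(A)/\sigma^*\ge 0$ in (\ref{eq: vy1009})), and making it explicit is a mild improvement rather than a gap.
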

\begin{proof}
We will construct the lower bound for $\sigma_{\min}(A)$ and the upper bound for $H_{\max}(A)$. From Lemma \ref{lemma: 4} in Appendix \ref{appendix for upper bound singular value}
\begin{equation}
\sigma_{\min}(A) \geq \dfrac{c_{\min}(A) -n/2}{c_{\min}(A)+1}=\sigma^*,
\end{equation}

and 

\begin{small}
\begin{eqnarray}
H_{\max}(\!A\!)\! &\! \leq \!& \log(c_{\min}(A)\!+\!1)\!+\!\dfrac{\log(n-1)\!-\!c_{\min}(A)\log c_{\min}(A)}{c_{\min}(A)+1} \nonumber\\
&\!=\!&H_{\max}^*(A).
\end{eqnarray}
\end{small}

Therefore

\begin{equation}
\frac{nH_{\max}(A)V}{{\sigma_{\min}(A)}} \leq \frac{n  H_{\max}^*(A)}{{\sigma^*}}.
\end{equation}

Thus, by changing $ \frac{nH_{\max}(A)}{{\sigma_{\min}(A)}}  $ in (\ref{eq:eigen1}) by $\frac{n  H_{\max}^*(A)}{{\sigma^*}}$, the Corollary \ref{corollary 2} is proven. 

We note that, when $c_{\min}(A)$ is relatively larger than the size of matrix $n$, the lower bound of $\sigma_{\min}(A)$ goes to 1. We also note that  (\ref{eq:eigen2}) can be checked efficiency without requiring both $H_{\max}(A)$ and $\sigma_{\min}(A)$ at the expense of a looser upper bound as compare to   (\ref{eq:eigen1}).
\end{proof}

\section{Examples and Numerical Results}
\subsection{Example 1: Reliable Channels}
We illustrate the optimality conditions in Proposition \ref{prop:condition} using a reliable channel having the channel matrix:
\begin{small}
\[
A=
  \begin{bmatrix}
0.95    &     0.01   &      0.04\\
         0.03     &     0.95     &    0.02\\        
         0.02     &    0.02    &     0.96
  \end{bmatrix}.
\]

\end{small}
Here, $n=3$, ${\sigma_{\min}(A)}=0.92424$, $\sigma^*=0.875$ and $H_{\max}(A)=0.33494$, $H_{\max}^*(A)=0.3364$. From Definition \ref{def:gershgorin}, $c_{\min}(A)=19$.  The closed form channel capacity can be readily computed by Proposition \ref{prop:upperbound} since the channel matrix satisfies both conditions in Proposition \ref{prop:condition} and Corollary \ref{corollary 2}. The optimal input and output probability mass vectors are: 
\begin{small}
\[
q^T=
  \begin{bmatrix}
    0.33087   &   0.32806   &   0.34107
  \end{bmatrix},
\]
\[
p^T=
  \begin{bmatrix}
    0.33067   &    0.33480   &   0.33453
  \end{bmatrix}, 
\]
\end{small}
respectively and the capacity is 1.2715. 

In general, for a good channel with $n$ inputs and $n$ outputs whose symbol error probabilities are small, then it is likely that the channel matrix will satisfy the optimality conditions in Proposition \ref{prop:upperbound}.  This is because the diagonal entries $A_{ii}$ (probability of receiving correct the $i^{th}$ symbol) tend to be larger than the sum of other entries in its row (probability of errors), satisfying the property of diagonally dominant matrix.

\subsection{Example 2: Cooperative Relay-MISO Channels}
In this example, we investigate the channel capacity for a class of channels named Relay-MISO (Relay - Multiple Input Single Output). Relay-MISO channel \cite{thuan_thinh_2018} can be constructed by the combination of a relay channel  \cite{cover1979capacity} \cite{rankov2006achievable} and a Multiple Input Single Output channel, as illustrated in Fig. \ref{fig: relay}. 

In a Relay-MISO channel, $n$ senders want to transmit data to a same receiver via $n$ relay base station nodes. The uplink of these senders using wireless links that are prone to transmission errors. Each sender can transmit bit ``0" or ``1" with the probability of bit flipping is $\alpha$,  $0 \leq \alpha \leq 1$. For a simplicity, suppose that $n$ relay channels have the same error probability $\alpha$. Next, all of the relay base station nodes will relay the signal by a reliable channel such as optical fiber cable to a same receiver. The receiver adds all the relay signals (symbols) to produce a single output symbol.

It can be shown that the channel matrix of this Relay-MISO channel \cite{thuan_thinh_2018}  is an invertible matrix of size $(n+1)\times(n+1)$ whose $A_{ij}$ can be computed as:
\begin{small}
\begin{equation*}
\label{eq: 6}
{ {A}_{ij}\!=\!\sum_{s=\max(i\!-\!j,0)}^{s=\min(n\!+\!1\!-\!j,i\!-\!1)}{j\!-\!i\!+\!s\choose n\!+\!1\!-\!i} {s\choose i\!-\!1} \alpha^{j-i+2s} (1\!-\!\alpha)^{n-(j-i+2s)}  }.
\end{equation*}
\end{small}
\begin{figure}
  \centering
  % Requires \usepackage{graphicx}
  \includegraphics[width=3 in, height = 3 in]{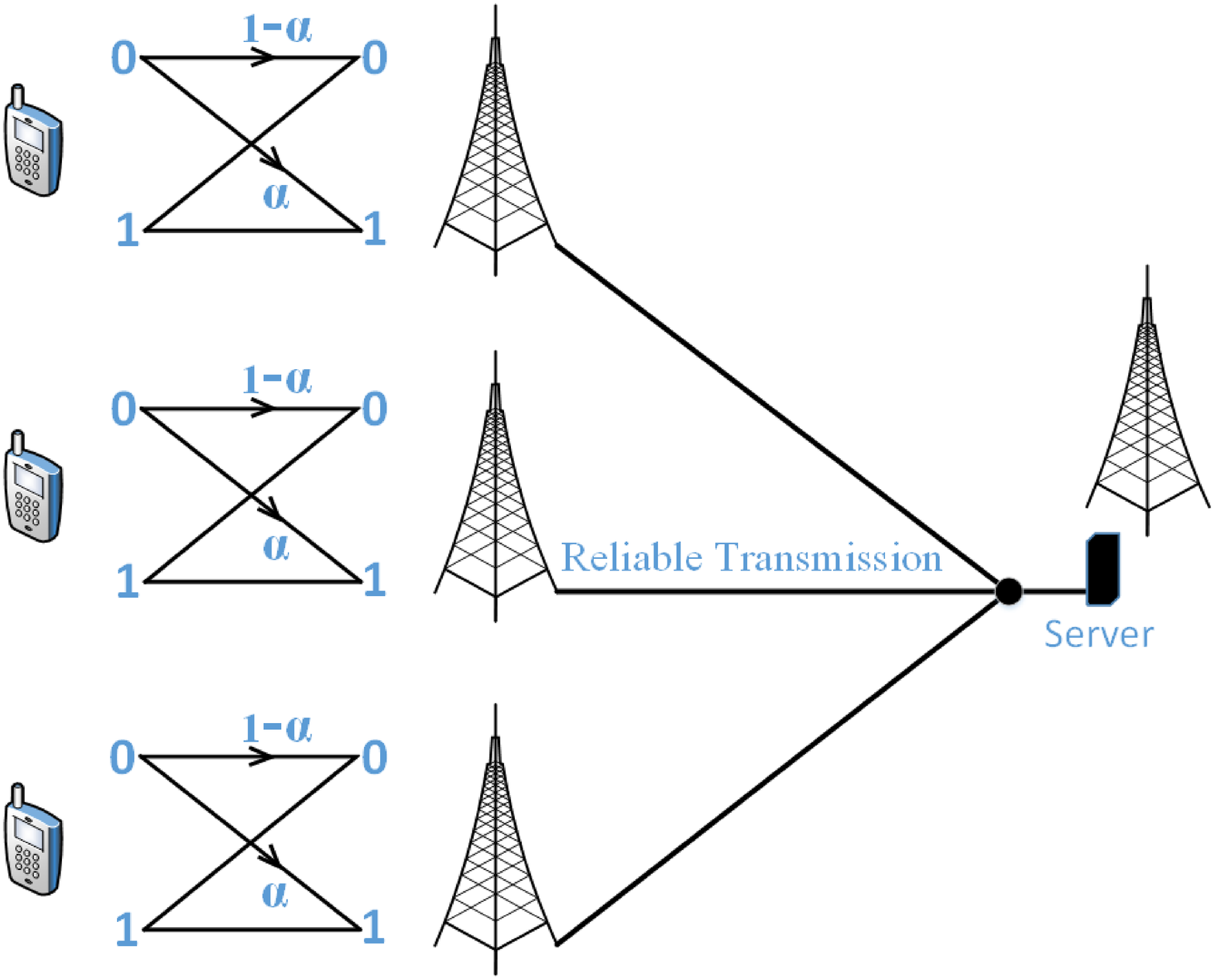}\\
  \caption{Relay-MISO channel}\label{fig: relay}
 \end{figure}
We note that this Relay-MISO channel matrix is invertible and the inverse matrix has the  closed form expression which is characterized in \cite{thuan_thinh_2018}. 
For example, the channel matrix of a Relay-MISO channel with $n=3$ is given as follows:

\begin{scriptsize}
$$\begin{array}{cc}
 \begin{bmatrix}
(1\!-\!\alpha)^3  \!&\! 3 (1\!-\!\alpha)^2 \alpha \!&\! 3(1\!-\!\alpha)\alpha^2 \!&\! \alpha^3\\
\alpha(1\!-\!\alpha)^2 & 2\alpha^2 (1\!-\!\alpha)  +(1\!-\!\alpha)^3 &  2(1\!-\!\alpha)^2 \alpha  \!+\! \alpha^3 & (1\!-\!\alpha)\alpha^2\\
(1\!-\!\alpha)\alpha^2  & 2(1\!-\!\alpha)^2 \alpha  \!+\!\alpha^3 & 2\alpha^2 (1\!-\!\alpha)  \!+\! (1\!-\!\alpha)^3 & \alpha(1\!-\!\alpha)^2 \\
\alpha^3 & 3(1\!-\!\alpha)\alpha^2 & 3 (1\!-\!\alpha)^2 \alpha & (1\!-\!\alpha)^3
\end{bmatrix},
\end{array} $$
\end{scriptsize}

where $0 \leq \alpha \leq 1$. We note that this channel matrix is strictly diagonally dominant matrix when $\alpha$ is close to 0 or $\alpha$ is close to 1. 
In addition, for $\alpha$ values that are close to 0 or 1, it can be shown that channel matrix $A$ satisfies the conditions in Proposition \ref{prop:condition}.  Thus, the channel capacity admits a closed form expression in Proposition \ref{prop:upperbound}.  For other values of $\alpha$, e.g. closer to 0.5, the optimality conditions in Proposition \ref{prop:condition} no longer holds.  In this case, Proposition \ref{prop:upperbound} can still be used as a good upper bound on the capacity.

We show that our upper bound is tighter than existing upper bounds. In particular, Fig. \ref{fig: compare} shows the actual capacity and the known upper bounds as functions of parameter $\alpha$ for Relay-MISO channels having $n=3$.  The green curve depicts the actual capacity computed using convex optimization algorithm.  The red curve is constructed using our closed form expression in Proposition \ref{prop:upperbound}, and the blue dotted curve is  the constructed using the well-known upper bound result of channel capacity in \cite{chiang2004geometric}, \cite{boyd2007tutorial}.  Specifically, this upper bound is:
\begin{equation}
C\leq \log(\sum_{j=1}^{n}\max_{i}A_{ij}).
\end{equation}
%
%Furthermore, the equality is achieved if and only if the output optimal distribution $q^*$ is:
%\begin{equation}
%q_j^*=\dfrac{\max_{i}A_{ij}}{\sum_{k=1}^{k=n}\max_{i}A_{ik}}
%\end{equation}
%for $\forall$  $j=1,2,\dots,n$. 
%
%In section \ref{sec: arimoto}, the Arimoto-Blahut algorithm also is investigated to compare with our closed form expression which is characterized in the previous part.

Finally, the red dotted curve shows another well-known upper bound by Arimoto \cite{arimoto1972algorithm} which is:
\begin{equation}
C\leq \log(n) + \max_{j}[\sum_{i=1}^{n} A_{ji} \log(\dfrac{A_{ji}}{\sum_{k=1}^{n}A_{ki}})].
\end{equation}

We note that the second term is negative.

Fig. \ref{fig: compare} shows that our closed form upper bound is precisely the capacity (the red and green graphs are overlapped) when $\alpha$ values are close to 0 or 1 as predicted by the optimality conditions in Proposition \ref{prop:condition}.  On the other hand, when $\alpha$ values are closer to 0.5, our optimality conditions no longer hold.  In this case, we can only determine the upper bound.  However, it is interesting to note that our upper bound in this case is tighter than both the Boy-Chiang \cite{chiang2004geometric} and Arimoto \cite{arimoto1972algorithm} upper bounds.
\begin{figure}
  \centering
  % Requires \usepackage{graphicx}
  \includegraphics[width=3 in, height = 3 in]{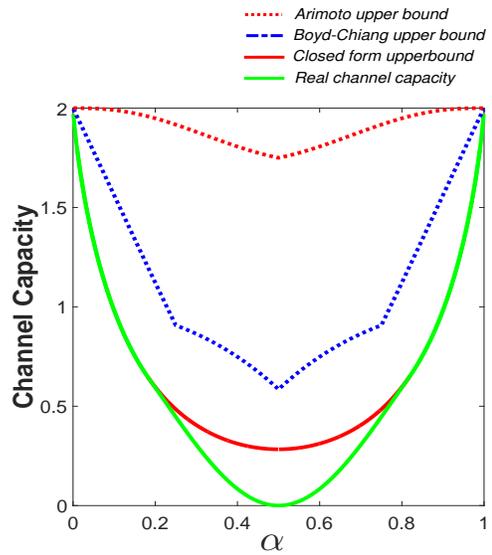}\\
  \caption{Channel capacity and various upper bounds as functions of $\alpha$}\label{fig: compare}
 \end{figure}

\subsection{Example 3: Symmetric and Weakly Symmetric Channels}
\label{sec: arimoto}

Our results confirm the capacity of the well known symmetric and weakly symmetric channel matrices.
In particular, when the channel matrix is symmetric and positive definite, all our results are applicable. Indeed, since the channel matrix is symmetric and positive definite, the inverse channel matrix exists and also is symmetric. From Definition \ref{def:K}, all values of $K_j$ is the same since they are the same sum of permutation entries. Therefore, from Proposition \ref{prop:upperbound}, the optimal output probability mass vector  
\begin{equation}
q^*_j = \frac{2^{-K_j}}{\sum_i^n{2^{-K_i}}}
\end{equation} 
are equal each other for all $j$.  As a result, the input probability mass function $p^* = A^{-T}q^*$
is the uniform distribution, and the channel capacity is upper bounded by:
\begin{eqnarray}
C &\le& -\sum_{j=1}^n{q^*_j\log{q^*_j}} + \sum_{i=1}^{n}\sum_{j=1}^{n} {p_i ^*A_{ij}} \log {A_{ij}} \\
&=&\log n - H(A_{row}).
\end{eqnarray}

Interestingly, our result also shows the capacities of many channels that are {\em not} weakly symmetric, but admits the closed form formula of weakly symmetric channels.  In particular, consider a channel matrix called semi-weakly symmetric whose all rows are permutations of each other, but the sum of entries in each column might not be the same.  Furthermore, if the optimal condition is satisfied (Proposition \ref{prop:condition}), then the channel has closed-form capacity which is identical to the capacity of a symmetric and weakly symmetric channel:
\begin{equation}
\label{eq: weakly closed form}
C = \log n - H(A_{row}).
\end{equation}

For example, the following channel matrix:
	\[
	A=
	\begin{bmatrix}
	0.93   &   0.04  &   0.03 \\
	0.04   &  0.93  &   0.03\\
	0.04  &   0.03  &    0.93\\
	\end{bmatrix}
	\]
is not a weakly symmetric channel even though its rows are permutations of each other since the column sums are different.  However, this channel matrix satisfies Proposition \ref{prop:condition} and Corollary \ref{corollary 2} since $n=3$, ${\sigma_{\min}(A)}=0.88916$, $\sigma^*=0.825$, $H_{\max}(A)=0.43489$, $H_{\max}^*(A)=0.43592$ and $c_{\min}(A)=13.286$.
Thus, it has closed form formula for capacity, and can be easily shown to be $C = \log 3 - H(0.93, 0.04, 0.03)=1.1501$. The optimal output and input probability mass vectors can be shown to be:
\[
q^T=
  \begin{bmatrix}
    0.33333   &   0.33333   &   0.33333 
  \end{bmatrix},
\]
\[
p^T=
  \begin{bmatrix}
   0.32959  & 0.33337  & 0.33704
  \end{bmatrix},
\]
respectively. 

%As stated previously, a semi-weakly symmetric channel has a channel matrix that (1) all rows are permutations of each other, but (2) the sum of entries in each column might not be the same. If a semi-weakly symmetric channel matrix  satisfies the Proposition \ref{prop:c_min_K_max_K_min} or Proposition \ref{prop:condition}, then the channel capacity will admit the closed-form expression in   (\ref{eq: weakly closed form}). 

The following channel matrix is another example of semi-weakly symmetric matrix whose entries are controlled by a parameter $\gamma$ in the range of $(0,1)$ and given by the following form:
\begin{small}
$$\begin{array}{cc}
 \begin{bmatrix}
(1-\gamma)^3  & 3 (1-\gamma)^2 \gamma & 3(1-\gamma)\gamma^2 & \gamma^3\\
3 (1-\gamma)^2 \gamma & (1-\gamma)^3 &  \gamma^3 & 3(1-\gamma)\gamma^2\\
\gamma^3  & 3(1-\gamma)\gamma^2 & (1-\gamma)^3 & 3 (1-\gamma)^2 \gamma \\
\gamma^3 & 3(1-\gamma)\gamma^2 & 3 (1-\gamma)^2 \gamma & (1-\gamma)^3
\end{bmatrix}.
\end{array} $$
\end{small}
%Numerically, it can be shown that the channel matrix satisfies the condition in Proposition \ref{prop:c_min_K_max_K_min} and \ref{prop:condition} for $\gamma \leq 0.1$. Therefore, the channel capacity admits the closed form expression identical to that of the well-known symmetric and weakly symmetric channels as shown in  (\ref{eq: weakly closed form}). 

Fig. \ref{fig: weakly symmetric} shows the capacity upper bound of the semi-weakly symmetric channel and the actual channel capacity as function of $\gamma$.  As seen, for most of   $\gamma$, the upper bound is identical to the actual channel capacity which is numerically determined using CVX \cite{grant2008cvx}.

% The matrix is a semi-weakly symmetric channel matrix for $\gamma \leq 0.8$.  As such, the channel capacity admits the closed form expression in  (\ref{eq: weakly closed form})  for the small values $\gamma \leq 0.8$.  These capacity is identical with the well-known symmetric and weakly symmetric channels  ($ C = \log n - H(A_{row})$).  With $\gamma \geq 0.8$, the Proposition \ref{prop:c_min_K_max_K_min} and Proposition \ref{prop:condition} no longer hold, then a upper bound will be established by  (\ref{eq: weakly closed form}). 
\begin{figure}
  \centering
  % Requires \usepackage{graphicx}
  \includegraphics[width=3 in, height = 3 in]{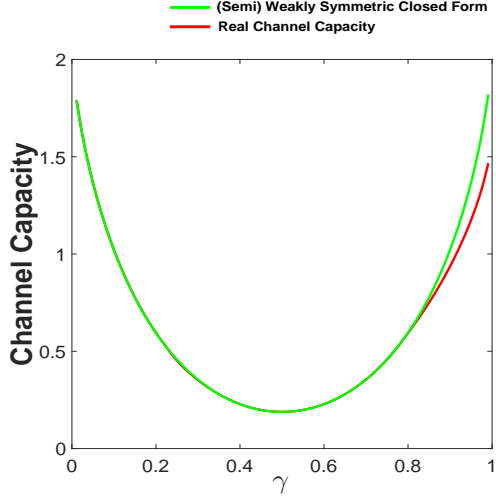}\\
  \caption{Channel capacity of  (semi) weakly symmetric channel as a function of  $\gamma$}\label{fig: weakly symmetric}
 \end{figure}

\subsection{Example 4: Unreliable Channels}
We now consider an unreliable channel whose channel matrix is:

\[
A=
  \begin{bmatrix}
0.6    &     0.3   &      0.1\\
         0.7     &     0.1     &    0.2\\        
         0.5     &    0.05    &     0.45 
  \end{bmatrix}.
\]

In this case, our optimality conditions do not satisfy, and the Arimoto upper bound is tightest ($0.17083$) as compared to our upper bound (0.19282) and Boyd-Chiang upper bound (0.848).

\subsection{Example 5: Bounds as Function of Channel Reliability}
Since we know that our proposed bounds are tight if the channel is reliable, we want to examine quantitatively how channel reliability affects various bounds.  In this example,  we consider a special class of channel whose channel matrix entries are controlled by a reliability parameter $\beta$ for  $0\leq \beta \leq 1$ as shown below: 
\[
A=
  \begin{bmatrix}
1-\beta  & 0.3 \beta & 0.4 \beta & 0.3 \beta\\
  0.4 \beta & 1-\beta & 0.3 \beta & 0.3 \beta\\
0.5\beta  & 0.4\beta & 1-\beta & 0.1 \beta\\
0.1\beta  & 0.2\beta & 0.7\beta & 1-\beta
  \end{bmatrix}.
\]

When $\beta$ is small, the channel tends to be reliable and when $\beta$ is large, the channel tends to be unreliable. 
Fig. \ref{fig: compare 2} shows various upper bounds as a function of $\beta$ together with the actual capacity.   The actual channel capacities for various $\beta$  are numerically computed using a convex optimization algorithm \cite{grant2008cvx}.   As seen, our closed form upper bound expression for capacity (red curve) from Proposition \ref{prop:upperbound} is much closer to the actual capacity (black dash curve) than other bounds for most values of $\beta$.  When $\beta$ is small ($\beta \leq 0.6$) or channel is reliable, the closed form upper bound is precise the real channel capacity, and we can verify that the optimal conditions in Proposition \ref{prop:condition} holds.  When the channel becomes unreliable, i.e.,  $\beta \geq 0.6$, our upper bound is no longer tight, however,  it is still the tightest among all the existing upper bounds. We note that when the $\beta$ is small, the channel matrix becomes a nearly diagonally dominant matrix, and our upper bound is tightest. 
\begin{figure}
  \centering
  % Requires \usepackage{graphicx}
  \includegraphics[width=3 in, height =3 in]{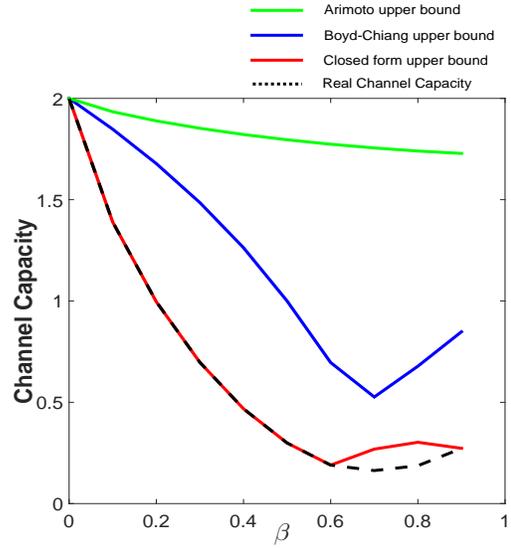}\\  
  \caption{Channel capacity and various upper bounds functions of  $\beta$}\label{fig: compare 2}
 \end{figure}
\section{Conclusion}
In this  paper, we describe an elementary technique based on Karush-Kuhn-Tucker (KKT) conditions to obtain (1) a good upper bound of a discrete memoryless channel having an invertible positive channel matrix and (2) a closed form expression for the capacity if the channel matrix satisfies certain conditions related to its singular value and its Gershgorin's disk. We provide a number of channels  where the proposed upper bound becomes precisely the capacity.  We also demonstrate that our proposed bounds are tighter than other existing bounds for these channels.

\medskip

\bibliographystyle{unsrt}
\bibliography{sample}

\begin{thebibliography}{10}

\bibitem{cover2012elements}
Thomas~M Cover and Joy~A Thomas.
\newblock {\em Elements of information theory}.
\newblock John Wiley \& Sons, 2012.

\bibitem{blahut1972computation}
Richard Blahut.
\newblock Computation of channel capacity and rate-distortion functions.
\newblock {\em IEEE transactions on Information Theory}, 18(4):460--473, 1972.

\bibitem{arimoto1972algorithm}
Suguru Arimoto.
\newblock An algorithm for computing the capacity of arbitrary discrete
  memoryless channels.
\newblock {\em IEEE Transactions on Information Theory}, 18(1):14--20, 1972.

\bibitem{muroga1953capacity}
Saburo Muroga.
\newblock On the capacity of a discrete channel, mathematical expression of
  capacity of a channel which is disturbed by noise in its every one symbol and
  expressible in one state diagram.
\newblock {\em Journal of the Physical Society of Japan}, 8(4):484--494, 1953.

\bibitem{shannon1956zero}
Claude Shannon.
\newblock The zero error capacity of a noisy channel.
\newblock {\em IRE Transactions on Information Theory}, 2(3):8--19, 1956.

\bibitem{robert1990ash}
B~Robert.
\newblock Ash. information theory, 1990.

\bibitem{nguyen2018closed}
Thuan Nguyen and Thinh Nguyen.
\newblock On closed form capacities of discrete memoryless channels.
\newblock In {\em 2018 IEEE 87th Vehicular Technology Conference (VTC Spring)},
  pages 1--5. IEEE, 2018.

\bibitem{martin2010algebraic}
Keye Martin, Ira~S Moskowitz, and Gerard Allwein.
\newblock Algebraic information theory for binary channels.
\newblock {\em Theoretical Computer Science}, 411(19):1918--1927, 2010.

\bibitem{liang2008algebraic}
Xue-Bin Liang.
\newblock An algebraic, analytic, and algorithmic investigation on the capacity
  and capacity-achieving input probability distributions of
  finite-input--finite-output discrete memoryless channels.
\newblock {\em IEEE Transactions on Information Theory}, 54(3):1003--1023,
  2008.

\bibitem{cotae2010eigenvalue}
Paul Cotae, Ira~S Moskowitz, and Myong~H Kang.
\newblock Eigenvalue characterization of the capacity of discrete memoryless
  channels with invertible channel matrices.
\newblock In {\em Information Sciences and Systems (CISS), 2010 44th Annual
  Conference on}, pages 1--6. IEEE, 2010.

\bibitem{grant2008cvx}
Michael Grant, Stephen Boyd, and Yinyu Ye.
\newblock Cvx: Matlab software for disciplined convex programming, 2008.

\bibitem{sinha2014convex}
Abhishek Sinha.
\newblock Convex optimization methods for computing channel capacity.
\newblock 2014.

\bibitem{dupuis2004blahut}
Fr{\'e}d{\'e}ric Dupuis, Wei Yu, and Frans~MJ Willems.
\newblock Blahut-arimoto algorithms for computing channel capacity and
  rate-distortion with side information.
\newblock In {\em Information Theory, 2004. ISIT 2004. Proceedings.
  International Symposium on}, page 179. IEEE, 2004.

\bibitem{matz2004information}
Gerald Matz and Pierre Duhamel.
\newblock Information geometric formulation and interpretation of accelerated
  blahut-arimoto-type algorithms.
\newblock In {\em Information theory workshop, 2004. IEEE}, pages 66--70. IEEE,
  2004.

\bibitem{yu2010squeezing}
Yaming Yu.
\newblock Squeezing the arimoto--blahut algorithm for faster convergence.
\newblock {\em IEEE Transactions on Information Theory}, 56(7):3149--3157,
  2010.

\bibitem{meister1967capacity}
Bernd Meister and Werner Oettli.
\newblock On the capacity of a discrete, constant channel.
\newblock {\em Information and Control}, 11(3):341--351, 1967.

\bibitem{jimbo1979iteration}
Masakazu Jimbo and Kiyonori Kunisawa.
\newblock An iteration method for calculating the relative capacity.
\newblock {\em Information and Control}, 43(2):216--223, 1979.

\bibitem{shannon1998mathematical}
Claude~E Shannon and Warren Weaver.
\newblock {\em The mathematical theory of communication}.
\newblock University of Illinois press, 1998.

\bibitem{cover1975achievable}
T~Cover.
\newblock An achievable rate region for the broadcast channel.
\newblock {\em IEEE Transactions on Information Theory}, 21(4):399--404, 1975.

\bibitem{boyd2004convex}
Stephen Boyd and Lieven Vandenberghe.
\newblock {\em Convex optimization}.
\newblock Cambridge university press, 2004.

\bibitem{weisstein2003gershgorin}
Eric~W Weisstein.
\newblock Gershgorin circle theorem.
\newblock 2003.

\bibitem{fiedler1967diagonally}
Miroslav Fiedler and Vlastimil Pt{\'a}k.
\newblock Diagonally dominant matrices.
\newblock {\em Czechoslovak Mathematical Journal}, 17(3):420--433, 1967.

\bibitem{thuan_thinh_2018}
Thuan Nguyen and Thinh Nguyen.
\newblock Relay-miso channel.
\newblock Available at \\ \url
  {http://ir.library.oregonstate.edu/concern/articles/tb09jb69h}, 2018.

\bibitem{cover1979capacity}
Thomas Cover and A~EL Gamal.
\newblock Capacity theorems for the relay channel.
\newblock {\em IEEE Transactions on information theory}, 25(5):572--584, 1979.

\bibitem{rankov2006achievable}
Boris Rankov and Armin Wittneben.
\newblock Achievable rate regions for the two-way relay channel.
\newblock In {\em Information theory, 2006 IEEE international symposium on},
  pages 1668--1672. IEEE, 2006.

\bibitem{chiang2004geometric}
Mung Chiang and Stephen Boyd.
\newblock Geometric programming duals of channel capacity and rate distortion.
\newblock {\em IEEE Transactions on Information Theory}, 50(2):245--258, 2004.

\bibitem{boyd2007tutorial}
Stephen Boyd, Seung-Jean Kim, Lieven Vandenberghe, and Arash Hassibi.
\newblock A tutorial on geometric programming.
\newblock {\em Optimization and engineering}, 8(1):67, 2007.

\bibitem{petersen2008matrix}
Kaare~Brandt Petersen, Michael~Syskind Pedersen, et~al.
\newblock The matrix cookbook.
\newblock {\em Technical University of Denmark}, 7(15):510, 2008.

\bibitem{Rayleigh_quotient}
Rayleigh quotient and the min-max theorem.
\newblock Available at \\ \url
  {http://www.math.toronto.edu/mnica/hermitian2014.pdf}, 2014.

\bibitem{johnson1989gersgorin}
Charles~R Johnson.
\newblock A gersgorin-type lower bound for the smallest singular value.
\newblock {\em Linear Algebra and its Applications}, 112:1--7, 1989.

\bibitem{hong1992lower}
YP~Hong and C-T Pan.
\newblock A lower bound for the smallest singular value.
\newblock {\em Linear Algebra and its Applications}, 172:27--32, 1992.

\end{thebibliography}
\appendix

\subsection{Proof of Lemma \ref{lemma:dominant}}
\label{sec: appendix for lemma 1}
For claim (a), since the channel matrix is strictly diagonally dominant, using Gershgorin circle theorem \cite{weisstein2003gershgorin} that for any eigenvalues $\lambda_1, \lambda_2, \dots, \lambda_n$, we must have:
\begin{equation*}
\lambda_i \geq  A_{ii} - \sum_{j \neq i}|A_{ij}| > 0.
\end{equation*}

Thus, $det(A) = \lambda_1\lambda_2\dots\lambda_n > 0$. Therefore, $A$ is invertible.

Claim (b) is a well-known algebra result \cite{petersen2008matrix}. 
%For claim (b), let $v_i$ be an eigenvector of $A$, then: 
%\begin{equation*}
%Av_i = \lambda_iv_i.
%\end{equation*}
%
%Left multiplying both sides by $A^{-1}$ and dividing both sides by $\lambda_i$, we have:
%\begin{equation*}
%\frac{1}{\lambda_i}v_i = A^{-1}v_i,
%\end{equation*}
%
%or eigenvalues of $A^{-1}$ are $\frac{1}{\lambda_i}$, $i = 1, 2, \dots, n$.

For claim (c), due to $AA^{-1}=I$ and $A_{ij}>0$ $\forall$ $i,j$, therefore, for $\forall$ $j$ exists at least $i$ such that $A^{-1}_{ij} \neq 0$.  Therefore the largest absolute entry in each column $\neq 0$.
Claim (c) can be obtained by contradiction. Suppose that the largest absolute entry in $j^{th}$ column of $A^{-1}$ is   $A^{-1}_{ij}$ in $i^{th}$ row, that said $|A^{-1}_{ij}| \geq |A^{-1}_{kj}|$ for $\forall$ $k$. We suppose that 
 $A^{-1}_{ij} <0$. Thus:
\begin{small}
\begin{eqnarray}
\sum_{k=1}^{n}A_{ik}A^{-1}_{kj} &\leq& -A_{ii}|A^{-1}_{ij}|+\sum_{k=1,k\neq i}^{n} A_{ik}|A^{-1}_{ij}| \label{eq: app3}\\
&=& (-A_{ii}+\sum_{k=1,k\neq i}^{n} A_{ik})|A^{-1}_{ij}| \nonumber\\
&<&0 \label{eq: app4},
\end{eqnarray}
\end{small}
which contradicts with $\sum_{k=1}^{n}A_{ik}A^{-1}_{kj}=I_{ij} \geq 0$. Thus, the largest absolute value in each column of $A^{-1}$ is positive. That said in $j^{th}$ column, if $|A^{-1}_{ij}| \geq |A^{-1}_{kj}|$ for $\forall$ $k$, then $A^{-1}_{ij} > 0$.

Now, suppose that the largest absolute element in $j^{th}$ column of $A^{-1}$, is $A^{-1}_{ij}$ with $i \neq j$ and $A^{-1}_{ij} >0$. Then:
\begin{small}
\begin{eqnarray}
0&=&\sum_{k=1}^{n}A_{ik}A^{-1}_{kj} \nonumber\\
&\geq& A_{ii}|A^{-1}_{ij}|-\sum_{k=1,k\neq i}^{n} A_{ik}|A^{-1}_{ij}| \label{eq: app1}\\
&=& (A_{ii}-\sum_{k=1,k\neq i}^{n} A_{ik})A^{-1}_{ij} \nonumber\\
&>&0 \label{eq: app2},
\end{eqnarray}
\end{small}
with  (\ref{eq: app1}) due to $A^{-1}_{ij}$ is the largest absolute element in $j^{th}$ column and  (\ref{eq: app2}) due to $A$ is strictly diagonally dominant matrix. This is a contradiction.  Therefore, the largest absolute entry in $j^{th}$ column of $A^{-1}$ should be $A^{-1}_{jj}$ and $A^{-1}_{jj} >0$. 

\subsection{Proof of Lemma \ref{lemma:c_i}}
\label{sec: appendix for lemma 2}
First, let's show that the second largest absolute value in each column of $A^{-1}$ is a negative entry by contradiction method. Suppose that the second largest absolute value in $j^{th}$ column of $A^{-1}$ is positive and in $k^{th}$ row ($k \neq j$), $A_{kj}^{-1} \geq 0$. Consider,
\begin{small}
\begin{eqnarray}
0 &=&\sum_{i=1}^{n}A_{ki}A_{ij}^{-1} \nonumber\\
 &\geq& A_{kj}A_{jj}^{-1} + A_{kk}A_{kj}^{-1} - |\sum_{i=1, i \neq k;i \neq j}^{n}A_{ki}A_{ij}^{-1}| \label{eq: 328}\\
  &\geq& A_{kj}A_{jj}^{-1} + A_{kk}A_{kj}^{-1} - \sum_{i=1, i \neq k;i \neq j}^{n}|A_{ki}A_{ij}^{-1}| \label{eq: 329}\\
 &\geq& A_{kj}A_{jj}^{-1} + A_{kk}A_{kj}^{-1} - \sum_{i=1, i \neq k;i \neq j}^{n}A_{ki}|A_{ij}^{-1}| \label{eq: 331}\\
 &\geq& A_{kj}A_{jj}^{-1} + A_{kk}A_{kj}^{-1} - \sum_{i=1, i \neq k;i \neq j}^{n}A_{ki}|A_{kj}^{-1}| \label{eq: 330}\\
 &=& A_{kj}A_{jj}^{-1} + A_{kj}^{-1}( A_{kk}-\sum_{i=1, i \neq k;i \neq j}^{n}A_{ki}) \label{eq: 332} \\
&>& 0 \label{eq: 340},
\end{eqnarray} 
\end{small}
with  (\ref{eq: 328}) due to the fact that $C \geq -|C|$ for $\forall$ $C$,  (\ref{eq: 329}) due to the triangle inequality,  (\ref{eq: 331}) due to $A_{ki}$ is positive,  (\ref{eq: 330}) due to $A_{kj}^{-1}$ is the second largest absolute value in $j^{th}$ column of $A^{-1}$,  (\ref{eq: 332}) due to the assumption that $A_{kj}^{-1} \geq 0$ and   (\ref{eq: 340}) due to  (\ref{eq: condition for diagonal}) such that $A_{kk} \geq \sum_{i=1, i \neq k}^{n}A_{ki} \geq \sum_{i=1, i \neq k;i \neq j}^{n}A_{ki}$. Thus, the second largest absolute value in column of $A^{-1}$ is negative ($A_{kj}^{-1} <0$). Due to Lemma \ref{lemma:dominant} part $c$, $A_{jj}^{-1}$ is the largest absolute value entry and $A_{jj}^{-1} >0$. Similarly,
\begin{small}
\begin{eqnarray}
0 &=&\sum_{i=1}^{n}A_{ki}A_{ij}^{-1} \nonumber\\
 &\leq& A_{kj}A_{jj}^{-1} + A_{kk}A_{kj}^{-1} + |\sum_{i=1, i \neq k;i \neq j}^{n}A_{ki}A_{ij}^{-1}| \label{eq: 343}\\
 &\leq& A_{kj}A_{jj}^{-1} + A_{kk}A_{kj}^{-1} + \sum_{i=1, i \neq k;i \neq j}^{n}|A_{ki}A_{ij}^{-1}| \label{eq: 344}\\
 &\leq& A_{kj}A_{jj}^{-1} + A_{kk}A_{kj}^{-1} + \sum_{i=1, i \neq k;i \neq j}^{n}A_{ki}|A_{ij}^{-1}| \label{eq: 341}\\
 &\leq& A_{kj}A_{jj}^{-1} - A_{kk}|A_{kj}^{-1}| + \sum_{i=1, i \neq k;i \neq j}^{n}A_{ki}|A_{kj}^{-1}| \label{eq: 342}
\end{eqnarray} 
\end{small}
with  (\ref{eq: 343}) due to  the fact that $C \leq |C|$ for $\forall$ $C$,  (\ref{eq: 344}) due to the triangle inequality,  (\ref{eq: 341}) due to $A_{ki} \geq 0$, $\forall$ $i$ and   (\ref{eq: 342}) due to $A_{kj}^{-1} <0$ and $A_{kj}^{-1}$ is the second largest absolute value in $j^{th}$ column. Hence,
\begin{small}
\begin{eqnarray}
 A_{kj}A_{jj}^{-1} &\geq& A_{kk}|A_{kj}^{-1}| - \sum_{i=1, i \neq k;i \neq j}^{n}A_{ki} |A_{kj}^{-1}|  \nonumber\\
 A_{jj}^{-1} &\geq& \dfrac{|A_{kj}^{-1}|  (A_{kk}  - \sum_{i=1, i \neq k;i \neq j}^{n}A_{ki})}{A_{kj}} \nonumber\\
  A_{jj}^{-1} &\geq& |A_{kj}^{-1}| \dfrac{A_{kk}-\dfrac{A_{kk}}{c_{\min}(A)}} {\dfrac{A_{kk}}{c_{\min}(A)}} \label{eq: 36}\\
   A_{jj}^{-1} &\geq&   |A_{kj}^{-1}| [c_{\min}(A)-1] \label{eq: 36-b},
\end{eqnarray}
\end{small}
for $\forall$ $j$, with  (\ref{eq: 36}) due to Definition \ref{def:gershgorin} and  (\ref{eq: c min A}) such that $\dfrac{A_{kk}}{c_{\min}(A)} \geq \sum_{i=1, i \neq k}^{n}A_{ki} \geq \sum_{i=1, i \neq k, i \neq j}^{n}A_{ki}$. Thus, we have:
\begin{small}
\begin{eqnarray}
\label{eq: c_i AT}
c_j(A^{-T}) = \dfrac{A_{jj}^{-1}}{\sum_{ k \neq j}^{}|A_{kj}^{-1}|} \geq \dfrac{c_{\min}(A)-1}{n-1}.
\end{eqnarray}
\end{small}

Thus,  (\ref{eq: thuan1419}) is proven. 

Next, we note that from  (\ref{eq: 36-b})
\begin{small}
\begin{equation}
 \dfrac{A_{jj}^{-1}}{c_{\min}(A)-1} \geq   |A_{kj}^{-1}|, 
\end{equation}
\end{small}
for $\forall$ $k$. Moreover, from Lemma \ref{lemma:dominant}, $A_{jj}^{-1} \geq 0$ and is the largest entry in $j^{th}$ row. Thus, for an arbitrary $L$ and $S$,
\begin{small}
\begin{eqnarray}
\label{eq: thuan1416}
|A^{-1}_{Lj}|+|A^{-1}_{Sj}| &\leq & A_{jj}^{-1} + \dfrac{A_{jj}^{-1}}{c_{\min}(A)-1} \nonumber\\
&=&  A_{jj}^{-1}\dfrac{c_{\min}(A)}{c_{\min}(A)-1},
\end{eqnarray}
\end{small}
for $\forall$ $j$. 
Thus,  (\ref{eq: thuan1418}) is proven. 

\subsection{Proof of Lemma \ref{lemma: singular}}
\label{sec: appendix for lemma 3}

Consider the matrix $B=A^{-1}A^{-T}$, $B$ is symmetric, all its eigenvalues  are real and satisfy the Rayleigh quotient \cite{Rayleigh_quotient}. Let $\lambda_B^{max}$ be the maximum eigenvalue of $B$ then from \cite{Rayleigh_quotient}
\begin{equation}
\label{eq: Raleigh}
R(B,x)=\dfrac{x^*Bx}{x^*x} \leq \lambda^{max}_B.
\end{equation}

Consider the unit vector $e=[0,\dots,1,\dots,0]^T$ with entry ``1" is in the $i^{th}$ column. Let $x = e$ in  (\ref{eq: Raleigh}), we have:
\begin{equation}
B_{ii} \leq \lambda_B^{max}.
\end{equation}

Thus,
\begin{eqnarray}
\lambda_B^{max} &\geq& B_{ii} \nonumber\\
 &=&\sum_{j=1}^{n}A^{-1}_{ij}A^{-1}_{ij} \nonumber\\
&\geq& {(A^{-1}_{ii})}^{2}.
\end{eqnarray}

Now since $B$ is a symmetric matrix $\lambda_B^{max}={\sigma_{\max}(B)}$ \cite{petersen2008matrix}. However, from \cite{petersen2008matrix}, ${\sigma_{\max}(B)}={\sigma_{\max}(A^{-1}}A^{-T})=\sigma_{\max}^2A^{-1}$ and ${\sigma_{\max}A^{-1}}=\dfrac{1}{{\sigma_{\min}(A)}}$. Thus:
\begin{equation}
\dfrac{1}{{\sigma_{\min}(A)}} \geq A^{-1}_{ii}.
\end{equation}

From Lemma \ref{lemma:dominant}-$c$, the largest entry in $A^{-1}$ must be a diagonal element, thus 
$$\max_{i,j}{A^{-1}_{ij}} \leq \dfrac{1}{{\sigma_{\min}(A)}}.$$

\subsection{Proof of Lemma \ref{prop:sum_rowAinv=1}}
\label{appendix for pmf}

For the first claim, since $A$ is a stochastic matrix,
$$A\textbf{1} = \textbf{1}.$$
Left multiply both sides by $A^{-1}$ results in
$\textbf{1} = A^{-1}\textbf{1}.$
For the second claim, left multiplying $y = {A^{-T}}x$ by $\textbf{1}^T$, we have:
$$\textbf{1}^Ty = \textbf{1}^T{A^{-T}}x = x^TA^{-1}\textbf{1} = x^T\textbf{1} = 1,$$
where we use $A^{-1}\textbf{1} = \textbf{1}$ in the previous claim.

Thus, we have $\sum_{i}^{n}p_i^*=1$ since from   (\ref{eq:q}), $q^*$ is a probability mass vector.  

\subsection{Proof of Corollary \ref{corollary 2}}
\label{appendix for upper bound singular value}
\begin{lemma}
\label{lemma: 4}
Lower bound of $\sigma_{\min}(A)$ and upper bound of $H_{\max}(A)$ are $\sigma^*$ and $H_{\max}^*(A)$, respectively
\begin{equation}
\label{eq: vy1000}
\sigma_{\min}(A) \geq \sigma^*=\dfrac{c_{\min}(A) -n/2}{c_{\min}(A)+1},
\end{equation}
and 
\begin{eqnarray}
\label{eq: vy1001}
H_{\max}(A) \leq H_{\max}^*(A),
\end{eqnarray}
where
\begin{equation}
H_{\max}^*(A)\!=\!\log(c_{\min}(A)+1)\!+\!\dfrac{\log(n\!-\!1)\!-\!c_{\min}(A)\log c_{\min}(A)}{c_{\min}(A)+1}.
\end{equation}

\end{lemma}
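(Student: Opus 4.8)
\emph{Proof proposal.} The plan is to reduce both inequalities to the single parameter $c := c_{\min}(A)$ and then handle the two claims independently. First recall the consequences of strict diagonal dominance of the stochastic matrix $A$: for every $i$ we have $\frac{A_{ii}}{1-A_{ii}} \ge c$, and since dominance forces $A_{ii} > 1-A_{ii}$ we also get $c > 1$. Hence $A_{ii} \ge \frac{c}{c+1}$, $\sum_{j\ne i} A_{ij} = 1-A_{ii} \le \frac{1}{c+1}$, and in particular every off-diagonal entry obeys $A_{ij} \le \frac{1}{c+1}$.

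For (\ref{eq: vy1000}) I would write $A = D + E$ with $D = \mathrm{diag}(A_{11},\dots,A_{nn})$ and $E = A - D$. For any unit vector $x$, $\|Ax\|_2 \ge \|Dx\|_2 - \|Ex\|_2 \ge \sigma_{\min}(D) - \sigma_{\max}(E)$, so $\sigma_{\min}(A) \ge \min_i A_{ii} - \|E\|_2$ (this is Weyl's perturbation inequality for singular values). Now $\min_i A_{ii} \ge \frac{c}{c+1}$, while for the perturbation I would use $\|E\|_2 \le \sqrt{\|E\|_1\,\|E\|_\infty} \le \tfrac12\big(\|E\|_1 + \|E\|_\infty\big)$ together with $\|E\|_\infty = \max_i (1-A_{ii}) \le \frac{1}{c+1}$ and $\|E\|_1 = \max_j \sum_{i\ne j} A_{ij} \le \frac{n-1}{c+1}$ (each such entry being $\le \frac{1}{c+1}$). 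Combining, $\sigma_{\min}(A) \ge \frac{c}{c+1} - \frac{n}{2(c+1)} = \frac{c-n/2}{c+1} = \sigma^*$.

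For (\ref{eq: vy1001}) I would fix a row $i$ and view $A_i$ as a probability vector whose $i$-th coordinate is $p := A_{ii} \ge \frac{c}{c+1} > \tfrac12$. By concavity of $t\mapsto -t\log t$ (equivalently, a Lagrange-multiplier argument), among probability vectors with that value in the $i$-th slot the entropy is maximized when the residual mass $1-p$ is split evenly over the remaining $n-1$ coordinates; hence $H(A_i) \le g(p)$ with $g(p) := -p\log p - (1-p)\log(1-p) + (1-p)\log(n-1)$. Since $g'(p) = \log\frac{1-p}{p(n-1)}$ and $\frac{1-p}{p} \le \frac1c < 1 \le n-1$ whenever $p \ge \frac{c}{c+1}$, the function $g$ is nonincreasing on $[\frac{c}{c+1},1)$, so $H(A_i) \le g\!\left(\frac{c}{c+1}\right)$. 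A direct simplification of $g\!\left(\frac{c}{c+1}\right)$ gives $\log(c+1) + \frac{\log(n-1) - c\log c}{c+1} = H_{\max}^*(A)$. Taking the maximum over $i$ yields the claim.

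The calculations involved (AM--GM for the norm bound, the derivative of $g$, and the evaluation of $g$ at the left endpoint) are routine. The only two places calling for a little care are invoking the correct perturbation bound for the smallest singular value and justifying the ``uniform residual'' entropy-maximization step; I expect the latter to be the most scrutinized line, although it is standard and not a genuine obstacle.
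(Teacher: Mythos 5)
Your proof is correct, and it splits naturally into a half that coincides with the paper and a half that is genuinely self-contained where the paper leans on the literature. The entropy bound is essentially the paper's own argument: bound $H(A_i)$ by spreading the off-diagonal mass uniformly, observe that the resulting one-variable function $g(p)$ is nonincreasing for $p\ge \frac{c}{c+1}$, and evaluate at the endpoint. One small improvement on your side: you only need $\frac{1-p}{p}\le n-1$, which you get from $c>1$ and $n\ge 2$, whereas the paper justifies monotonicity under the stronger condition $\frac{x}{1-x}\ge n-1$ by invoking the hypothesis of its Corollary 2 (and its displayed derivative has a sign slip that your computation $g'(p)=\log\frac{1-p}{p(n-1)}$ corrects); your version therefore holds for every strictly diagonally dominant positive stochastic matrix without extra assumptions. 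For the singular value, the paper simply cites Gershgorin-type lower bounds of Johnson and of Hong--Pan, namely $\sigma_{\min}(A)\ge \min_k\bigl(|A_{kk}|-\tfrac12(R_k(A)+C_k(A))\bigr)$, and substitutes the same estimates $A_{kk}\ge\frac{c}{c+1}$, $R_k\le\frac{1}{c+1}$, $C_k\le\frac{n-1}{c+1}$. You instead derive an equivalent inequality from scratch via $A=D+E$, the triangle/Weyl inequality $\sigma_{\min}(A)\ge\min_iA_{ii}-\|E\|_2$, the interpolation $\|E\|_2\le\sqrt{\|E\|_1\|E\|_\infty}$, and AM--GM; this lands on $\min_iA_{ii}-\tfrac12(\max_kR_k+\max_kC_k)$, which is a priori slightly weaker than the cited per-index bound but yields the identical $\sigma^*$ once the uniform estimates are inserted. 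What your route buys is a citation-free, elementary proof; what the paper's route buys is brevity and, in principle, a sharper intermediate bound via Hong's $\{4|A_{kk}|^2+(R_k-C_k)^2\}^{1/2}$ term, which is discarded anyway in the final estimate.
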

\begin{proof}

Due to the channel matrix is a strictly diagonally dominant positive matrix. Thus, we have
\begin{equation}
\label{eq: 201}
A_{kk} \geq \dfrac{c_{\min}(A)}{c_{\min}(A) +1},
\end{equation}
\begin{equation}
\label{eq: 202}
R_k(A)=1-A_{kk} \leq 1-\dfrac{c_{\min}(A)}{c_{\min}(A) +1}=\dfrac{1}{c_{\min}(A) +1},
\end{equation}
\begin{equation}
\label{eq: 203}
C_k(A)=\sum_{j=1,j\neq k}^{j=n}A_{jk} \leq \sum_{j=1,j\neq k}^{j=n} R_j(A) \leq \dfrac{n-1}{c_{\min}(A) +1},
\end{equation}

for $\forall$ $k$ with  (\ref{eq: 201}) due to  (\ref{eq: find c min}),  (\ref{eq: 202}) due to  (\ref{eq: 201}),  (\ref{eq: 203}) due to the fact that $\forall$ $j \neq k$, $A_{jk} \leq \sum_{j\neq k}A_{jk}=R_j(A)$  and each $R_{j}(A) \leq \dfrac{1}{c_{\min}(A) +1}$ which is proven in  (\ref{eq: 201}). 
Now, we are ready to establish the upper bound of $H_{\max}(A)$ and the lower bound of $\sigma_{\min}(A)$, respectively. 

$
\bullet$
Suppose that $H_{\max}(A)$ achieves at  $k^{th}$ row, then
\begin{small}
\begin{eqnarray}
\label{eq: upper bound H max}
H_{\max}(A) &\!=\!&\!-\!(\sum_{i=1}^{n}A_{ki}\log A_{ki}) \nonumber\\
&\!=\!&- (A_{kk}\log A_{kk} +\sum_{i=1,i \neq k}^{n}A_{ki}\log A_{ki} ) \nonumber\\
&\!=\!&- A_{kk}\log A_{kk}  \nonumber\\
&\!-\!& (1\!-\!A_{kk})\sum_{i\!=\!1,i \!\neq \!k}^{n}\dfrac{A_{ki}}{1\!-\!A_{kk}}(\log \dfrac{A_{ki}}{1\!-\!A_{kk}}\!+\!\log (1\!-\!A_{kk})) \nonumber\\
&\!=\!&- A_{kk}\log A_{kk} \nonumber\\
&-& (1-A_{kk})\sum_{i=1,i \neq k}^{n}\dfrac{A_{ki}}{1-A_{kk}}\log \dfrac{A_{ki}}{1-A_{kk}} \nonumber\\
&-& (1-A_{kk})\log (1-A_{kk}) \nonumber\\
&\!\leq\!&- A_{kk}\log A_{kk} + (1-A_{kk})\log(n-1) \nonumber\\
&-& (1-A_{kk})\log (1-A_{kk}) \label{eq: vy1003}\\
&\!=\!&- (A_{kk}\log A_{kk} + (1-A_{kk})\log(\dfrac{1-A_{kk}}{n-1})) \nonumber\\
&\! \leq \!&- (\dfrac{c_{\min}(A)}{c_{\min}(A) +1}\log\dfrac{c_{\min}(A)}{c_{\min}(A) +1} \nonumber\\
&\!+\!& (1-\dfrac{c_{\min}(A)}{c_{\min}(A) +1})\log\dfrac{1-\dfrac{c_{\min}(A)}{c_{\min}(A) +1}}{n-1}) \label{eq: vy1004}\\
&\!=\!& \log(c_{\min}(\!A\!)\!+\!1)\!+\!\dfrac{\log(n\!-\!1)\!-\!c_{\min}(\!A\!)\log c_{\min}(\!A\!)}{c_{\min}(A)+1} \nonumber,
\end{eqnarray}
\end{small}
with (\ref{eq: vy1003}) is due to $-\sum_{i=1,i \neq k}^{n}\dfrac{A_{ki}}{1-A_{kk}}\log \dfrac{A_{ki}}{1-A_{kk}}$
 is the entropy of $n-1$ elements which is bounded by $\log(n-1)$. For (\ref{eq: vy1004}), first we show that 
$f(x)=- (x\log x + (1-x)\log(\dfrac{1-x}{n-1}))$ is monotonically decreasing function for $\dfrac{x}{1-x} \geq n-1$. Indeed,
\begin{eqnarray}
\dfrac{d(f(x))}{d(x)}&=& \log x-\log(1-x)-\log(n-1) \nonumber\\
&=&-(\log\dfrac{x}{1-x} - \log (n-1)) \nonumber.
\end{eqnarray}

Thus, if $\dfrac{x}{1-x} \geq n-1$ then $\dfrac{d(f(x))}{d(x)} \leq 0$. However,
from (\ref{eq: 201}), 
\begin{equation}
\label{eq: vy1010}
\dfrac{A_{kk}}{1-A_{kk}} \geq \dfrac{\dfrac{c_{\min}(A)}{c_{\min}(A) +1}}{1-\dfrac{c_{\min}(A)}{c_{\min}(A) +1}} =c_{\min}(A).
\end{equation}

From (\ref{eq:eigen2})
\begin{equation}
\label{eq: vy1009}
c_{\min}(A) \geq 1+(n-1)^2 2^{\frac{n  H_{\max}^*(A)}{{\sigma^*}}} \geq 1+(n-1)^2 > n-1,
\end{equation}
due to $\frac{n  H_{\max}^*(A)}{{\sigma^*}} \geq 0$ and  $n \geq 2$. Thus, $\dfrac{A_{kk}}{1-A_{kk}} >n-1$. 
From (\ref{eq: vy1010}) and (\ref{eq: vy1009}), $f(x)$ is decreasing function and  (\ref{eq: vy1004}) is constructed by plugging the lower bound of $A_{kk}$ in (\ref{eq: 201}). 

$
\bullet$
Secondly, the lower bound of $\sigma_{\min}(A)$ can be found in \cite{johnson1989gersgorin} (Theorem 3)
\begin{equation}
\label{eq: upper bound singular 1}
\sigma_{\min}(A) \geq \min_{1\leq k \leq n} |A_{kk}|-\dfrac{1}{2} (R_{k}(A) + C_{k}(A)),
\end{equation}

or in \cite{hong1992lower} (Theorem 0)
\begin{small}
\begin{equation}
\label{eq: upper bound singular 2}
\!\sigma_{\min}(\!A\!) \!\geq \!\min_{1\leq k \leq n}  \dfrac{1}{2}( \{4|A_{kk}|^2 \!+\!(R_k(\!A\!)\!-\!C_k(\!A\!))^2  \}^{1/2} \!-\![ R_k(\!A\!)\!+\!C_k(\!A\!) ] ),
\end{equation}
\end{small}
with $R_k(A)=\sum_{j=1,j\neq k}^{j=n}|A_{kj}|$ and $C_k(A)=\sum_{j=1,j\neq k}^{j=n}|A_{jk}|$, respectively. 
Thus, if we use the lower bound established in  (\ref{eq: upper bound singular 2}), 
\begin{eqnarray}
\sigma_{\min}(A) & \geq& \dfrac{1}{2}( \{4  [\dfrac{c_{\min}(A)}{c_{\min}(A) +1}]^2  \}^{1/2} \nonumber\\
&-& [ \dfrac{1}{c_{\min}(A) +1} +\dfrac{n-1}{c_{\min}(A) +1} ] ) \label{eq: 204}\\
&=& \dfrac{c_{\min}(A) -n/2}{c_{\min}(A)+1}=\sigma^* \nonumber,
\end{eqnarray}
with  (\ref{eq: 204}) due to  (\ref{eq: 201}),  (\ref{eq: 202}),  (\ref{eq: 203}) and the fact that $\{R_k(A)-C_k(A)\}^2 \geq 0$. 

A similar lower bound can be constructed using  (\ref{eq: upper bound singular 1})
\begin{eqnarray}
\sigma_{\min}(A) & \geq& \dfrac{c_{\min}(A)}{c_{\min}(A) +1} \nonumber\\
&-& \dfrac{1}{2} (\dfrac{1}{c_{\min}(A) +1} +\dfrac{n-1}{c_{\min}(A) +1}) \label{eq: 205}\\
&=& \dfrac{c_{\min}(A) -n/2}{c_{\min}(A)+1}=\sigma^* \nonumber,
\end{eqnarray}
with  (\ref{eq: 205}) due to  (\ref{eq: 201}),  (\ref{eq: 202}) and  (\ref{eq: 203}).
As seen, both our approaches yield a same lower bound of $\sigma_{\min}(A)$.  However,  (\ref{eq: upper bound singular 2}) is  tighter than  (\ref{eq: upper bound singular 1}) due to $\{R_k(A)-C_k(A)\}^2$.

\end{proof}
\end{document}